\declaretheorem[name=Reduction Rule]{redrule}
\declaretheorem[name=Branching Rule]{branchrule}
\declaretheorem[name=Lemma]{lem}
\newcommand{\revision}[1]{\textcolor{black}{#1}}
\newcommand{\prob}[6]{%
  \needspace{3\baselineskip}
  \begin{quote}
    \begin{labeling}{#6}%
    \item[#1]
    \item[\emph{#2}]#3
    \item[\emph{#4}]#5
    \end{labeling}%
  \end{quote}%
}
\renewcommand{\@Opargbegintheorem}[4]{%
  #4\trivlist\item[\hskip\labelsep{#3#2\@thmcounterend}]}
\newcommand{\probdef}[3]{\prob{#1}{Instance:}{#2}{Question:}{#3}{as}}
\newcommand{\proofparagraph}[1]{\smallskip\emph{#1}}
\DeclareMathOperator*{\argmax}{arg\,max}
\newcommand{\blocks}{\ensuremath{\mathsf{blocks}}}
\newcommand{\breakpoints}{\ensuremath{\mathsf{bp}}}
\newcommand{\id}{\ensuremath{\mathsf{id}}}
\newcommand{\N}{\ensuremath{\mathbb{N}}}
\newcommand{\leaves}{\ensuremath{\mathsf{leaf}}}
\newcommand{\rroot}{\ensuremath{\mathsf{root}}}
\newcommand{\lc}{\ensuremath{\mathsf{lc}}}
\newcommand{\rc}{\ensuremath{\mathsf{rc}}}
\newcommand{\child}{\ensuremath{\mathsf{ch}}}
\newcommand{\parent}{\ensuremath{\mathsf{par}}}
\newcommand{\ances}{\ensuremath{\mathsf{anc}}}
\newcommand{\depth}{\ensuremath{\mathsf{depth}}}
\newcommand{\opt}{\ensuremath{\mathsf{opt}}}
\newcommand{\rob}{\ensuremath{\mathsf{rob}}}
\newcommand{\parr}{\ensuremath{\mathsf{par}}}
\newcommand{\preced}{\ensuremath{\mathsf{prec}}}
\newcommand{\lca}{\ensuremath{\mathsf{lca}}}
\newcommand{\inv}{\ensuremath{\mathsf{inv}}}
\newcommand{\gl}{\ensuremath{\mathsf{gl}}}
\newcommand{\adjbef}{\ensuremath{\mathsf{adjbef}}}
\newcommand{\adjaft}{\ensuremath{\mathsf{adjaft}}}
\newcommand{\bcrproblong}{\textup{\textsc{One-Tree Block Crossing Minimization}}}
\newcommand{\bcrprob}{\textup{\textsc{OTBCM}}}
\newcommand{\bcrextension}{\textup{\textsc{OTBCM-Extension}}}
\newcommand{\transsort}{\textup{\textsc{Sorting by Transpositions}}}
\renewcommand{\orcidID}[1]{\href{https://orcid.org/#1}{\includegraphics[scale=.03]{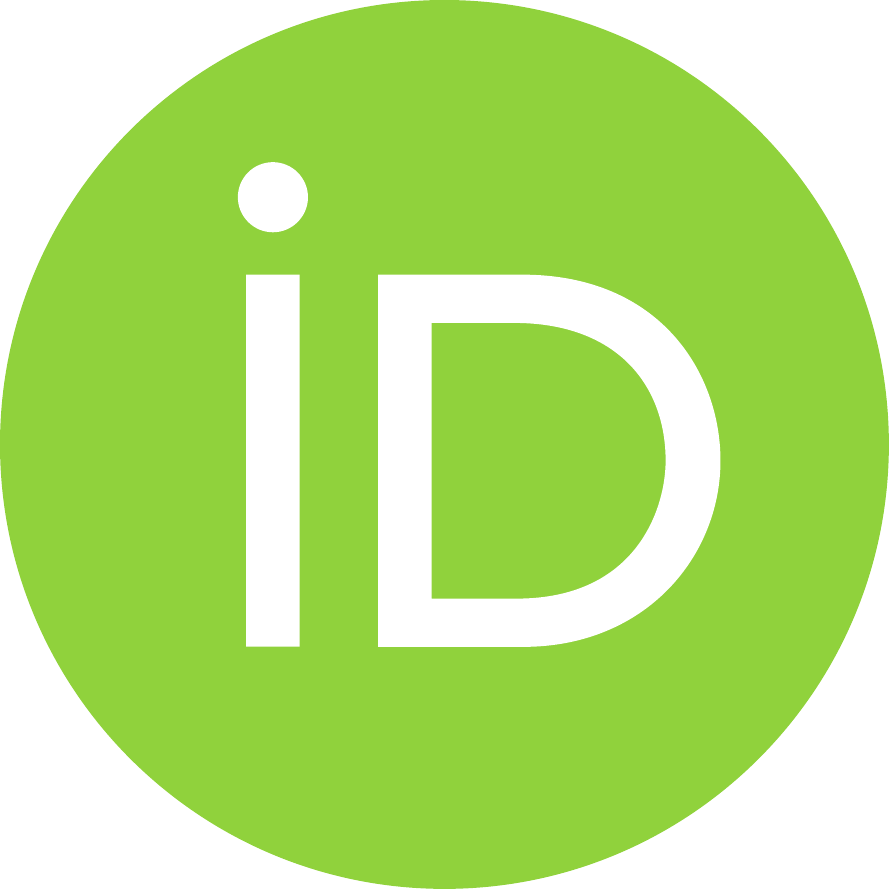}}} 
\begin{document}
\title{Block Crossings in One-Sided Tanglegrams\texorpdfstring{\thanks{This work has been supported by the Vienna Science and Technology Fund (WWTF) [10.47379/ICT19035].}}{}}
\author{Alexander Dobler \orcidID{0000-0002-0712-9726}
\and
  Martin Nöllenburg \orcidID{0000-0003-0454-3937}
}
\authorrunning{A. Dobler and M. Nöllenburg}
\institute{Algorithms and Complexity Group, TU Wien, Vienna, Austria \email{\{adobler,noellenburg\}@ac.tuwien.ac.at}}
\maketitle              %
\begin{abstract}
    Tanglegrams are drawings of two rooted binary phylogenetic trees and a matching between their leaf sets. The trees are drawn crossing-free on opposite sides with their leaf sets facing each other on two vertical lines. Instead of minimizing the number of pairwise edge crossings, we consider the problem of minimizing the number of \emph{block crossings}, that is, two bundles of lines crossing each other locally.

    \looseness=-1
    With one tree fixed, the leaves of the second tree can be permuted according to its tree structure. We give a complete picture of the algorithmic complexity of minimizing block crossings in one-sided tanglegrams by showing \NP-completeness, constant-factor approximations, and a fixed-parameter algorithm. We also state first results for non-binary trees.
\end{abstract}

\section{Introduction}
\looseness=-1
Tanglegrams~\cite{pageTangledTreesPhylogeny2003} are drawings of two rooted $n$-leaf trees and a matching between their leaf sets drawn as straight edges. The trees are drawn such that one tree is on the left and the other is on the right with their leaf sets facing each other on two vertical lines (see \cref{fig:tlp}).
An important application of tanglegrams is the comparison of two phylogenetic trees with the same leaf set~\cite{szh-trptn-11,vajg-utcttd-10}, which can be used to study co-speciation or for comparison of hypothetical phylogenetic trees computed by different algorithms. Other applications are comparisons of dendrograms in hierarchical clustering or software hierarchies~\cite{hw-vchod-08}. The readability of tanglegrams heavily depends on the order of the two leaf sets on the vertical lines, as this determines the number of edge crossings between the matching edges. The possible orders depend on the tree structure of both trees, so finding appropriate orders that minimize the number of pairwise crossings is a nontrivial problem known as the \textsc{Tanglegram Layout Problem} (TLP)~\cite{fernauComparingTreesCrossing2010,ds-olohdptv-04,buchinDrawingCompleteBinary2012}.
In this paper we focus on minimizing block crossings instead of pairwise edge crossings; see \cref{fig:tlpblockcrossing} for an example. 
This is done by relaxing that the matching edges must be drawn as straight lines, but rather drawing them as $x$-monotone curves, which allows for shifting and grouping crossings more flexibly. 
A \emph{block crossing}~\cite{DBLP:journals/jgaa/DijkFFLMRSW17} is then defined as a crossing between two disjoint sets of edges in a confined region $R$, each of which forms a bundle of locally parallel curves in $R$; no further edge may intersect with $R$.
Block crossings provide the ability to group multiple crossings together, instead of having them scattered throughout the drawing. Furthermore, this mostly leads to fewer block crossings than the required number of pairwise crossings and thus reduces visual clutter~\cite{fhsv-bceg-16}. We initiate the work on block crossings for tanglegrams and focus on tanglegrams for binary trees where the leaf order of one tree is fixed.
\begin{figure}[t]
    \centering
    \begin{subfigure}[b]{0.48\textwidth}
        \centering
        \includegraphics[width=.8\linewidth]{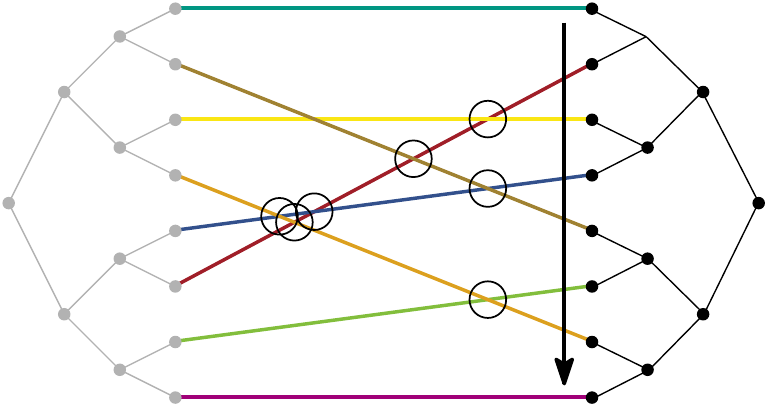}
        \caption{Optimal solution with 7 pairwise crossings.}
        \label{fig:tlp}
    \end{subfigure}
    \hfill
    \begin{subfigure}[b]{0.48\textwidth}
        \centering
        \includegraphics[width=.8\linewidth]{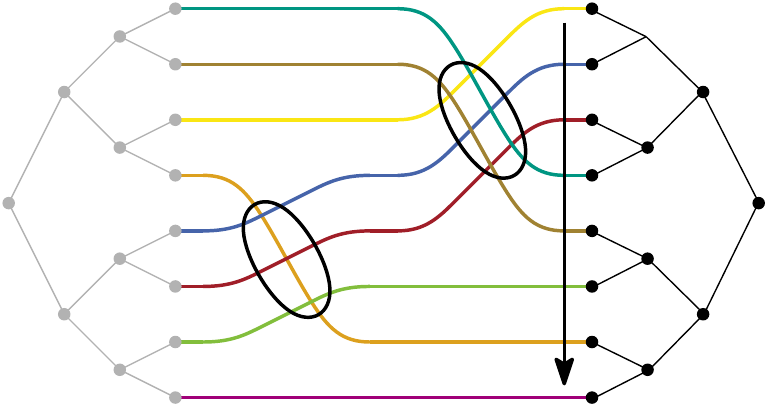}
        \caption{Optimal solution with 2 block crossings (but 9 pairwise crossings).}
        \label{fig:blockcrossing}
    \end{subfigure}
    \caption{Two tanglegrams for the same input. The leaf order of the left tree is fixed, the right tree can be permuted to minimize pairwise (a) or block crossings~(b).}
    \label{fig:tlpblockcrossing}
\end{figure}
\paragraph*{Related work.}
The TLP has been studied mainly for binary trees, as phylogenetic trees are mostly binary: If both leaf sets can be ordered then the problem is known to be \NP-complete \cite{fernauComparingTreesCrossing2010}, even if the trees are complete \cite{buchinDrawingCompleteBinary2012}. However, there exist approximation results \cite{buchinDrawingCompleteBinary2012}, fixed-parameter algorithms \cite{fernauComparingTreesCrossing2010,buchinDrawingCompleteBinary2012,bockerFasterFixedParameterApproach2009}, integer linear programming formulations \cite{DBLP:conf/wea/BaumannBL10}, and heuristics \cite{nollenburgDrawingBinaryTanglegrams2009}. If the leaf order of one tree is fixed, then the problem is solvable in polynomial time \cite{fernauComparingTreesCrossing2010}; if the trees are not binary, however, then even this problem is \NP-complete \cite{DBLP:conf/cpm/BulteauGS22}.

\emph{Edge bundling} is a technique in network visualization that groups multiple edges together to reduce visual clutter. If two edge bundles locally cross each other this is called a \emph{bundled crossing}~\cite{fhsv-bceg-16,afp-bcn-16}. For a collection of results on bundled crossings in general graphs we refer to \cite{DBLP:books/sp/20/Nollenburg20}.
Essentially, a \emph{block crossing} is the same as a bundled crossing.
But the term block crossing has been used mainly in contexts, where bundled crossings are determined in a purely combinatorial way using permutations, and no topology is required. Our work is similar, so we adopt that term, and give an overview of previous work on block crossings: Fink et al.~\cite{DBLP:journals/jgaa/FinkPW15} worked on minimizing block crossings amongst metro lines in a pre-specified metro network where multiple metro lines might be routed along the same edge. They considered general metro networks but mostly focused on special cases such as paths, trees, and upward trees. Their results mostly include approximation and fixed-parameter algorithms, as even the most restricted variant of their problem is \NP-complete.
Van Dijk et al.~\cite{DBLP:journals/jgaa/DijkFFLMRSW17} studied block crossings in the context of storylines. They showed \NP-completeness, fixed-parameter tractability, and an approximation algorithm.
Both of the above works on block crossings pointed out the connection between block crossings and a problem stemming from computational biology called \transsort~\cite{DBLP:journals/siamdm/BafnaP98}, where a permutation has to be transformed into the identity permutation by exchanging adjacent blocks of elements, calling this operation a \emph{transposition}.

\paragraph*{Our contribution.}
\looseness=-1
We study block crossings in the context of tanglegrams. More precisely, we are given two rooted $n$-leaf trees $T_1$ and $T_2$ and a matching between their leaf sets. The order of the leaves in $T_1$ is fixed. Our problem is to find a permutation of the leaves of $T_2$ that is consistent with the tree structure, and admits the minimal number of block crossings routing the matching edges from $T_2$ to $T_1$.
For a formal problem definition we refer to \cref{sec:bcr}. 
We focus mostly on binary trees $T_2$: In \cref{sec:hardness} we show that the problem is \NP-complete even for complete binary trees. In \cref{sec:approx} we give 2.25-approximation algorithms, the first for general binary trees, and a  faster second one for complete binary trees. In \cref{sec:fpt} we show that the problem is fixed-parameter tractable (FPT) parameterized by the number of block crossings.
In \cref{sec:beyond} we show that the techniques used in \cref{sec:approx} to find a polynomial-time approximation algorithm do not extend to non-binary trees.
We start by giving some preliminaries in the following section.

\section{Preliminaries}
Let $\delta_{i,j}$ be the Kronecker delta function that is 1 if $i=j$ and 0 otherwise.
Let $[n]=\{1,\dots, n\}$.
\paragraph*{Permutations.}
A \emph{permutation} $\pi:[k]\to X$ is a bijective function from~$[k]$ to a set~$X$, in particular, $\pi:[k]\to [k]$ is a permutation of $[k]$, and $\inv(\pi)$ is its inverse.
We write $\pi_i$ for $\pi(i)$ and use superscript if we want to tell apart multiple permutations.
Sometimes we write permutations~$\pi$ as sequences of elements $(\pi_1,\dots,\pi_k)$.
We denote by~$\Pi_n$ the set of all permutations from~$[n]$ to~$[n]$.
For a permutation $\pi\in\Pi_n$ and $i\in [n]$, let $\pi\ominus\pi_i$ be the permutation of $[n-1]$ obtained by first removing $\pi_i$ from $\pi$, and then decreasing all elements of $\pi$ greater than $\pi_i$ by one.
For two permutations $\pi=(x_1,\dots,x_n)$ and $\pi^\prime=(y_1,\dots,y_m)$ of disjoint elements, we denote by $\pi\star \pi^\prime$ their \emph{concatenation} $(x_1,\dots x_n,y_1,\dots,y_m)$.
For two sets~$\Pi$ and~$\Pi^\prime$ of permutations, we define $\Pi\star \Pi^\prime=\{\pi\star \pi^\prime\mid \pi\in\Pi,\pi^\prime\in\Pi^\prime\}$.
\paragraph*{Transpositions.}\label{par:transpositions}
A \emph{transposition} $\tau=\tau(i,j,k)\in \Pi_n$ with $1\le i<j<k\le n+1$ is the permutation
\[(1,\dots, i-1,j,\dots, k-1, i,i+1,\dots, j-2,j-1,k,\dots, n).\]
This definition is different from the classic transpositions in discrete mathematics as it stems from computational biology \cite{DBLP:journals/siamdm/BafnaP98}.
Assume $\pi\in \Pi_n$.
The permutation $\pi\circ \tau(i,j,k)$ has the effect of swapping the contiguous subsequences $\pi_i,\dots,\pi_{j-1}$ and $\pi_j,\dots, \pi_{k-1}$.
A \emph{block} in a permutation $\pi\in \Pi_n$ is a maximal contiguous subsequence $\pi_i,\pi_{i+1},\dots,\pi_j$ that is also a contiguous subsequence of the identity permutation $\id_n$. The number of blocks in $\pi$ is denoted by $\blocks(\pi)$.
An index $i\in [n]\cup\{0\}$ is a \emph{breakpoint} if
\begin{itemize}
    \item $i=0$ and $\pi_1\ne 1$,
    \item $1\le i\le n-1$ and $\pi_i+1\ne \pi_{i+1}$, or
    \item $i=n$ and $\pi_n\ne n$.
\end{itemize}
Essentially, a breakpoint in $\pi$ corresponds to a pair $(x,y)$ of adjacent elements in the extended permutation $\pi^e=(0)\star \pi\star (n+1)$ such that $x+1\ne y$.
Each breakpoint $i$ in $\pi$ has a corresponding breakpoint element $\pi^e_{i+1}$ (this can include $0$). Conversely, we say that $\pi^e_{i+1}$ corresponds to breakpoint $i$. Let $\breakpoints(\pi)$ be the number of breakpoints in $\pi$.
The transposition distance $d_t(\pi)$ of $\pi$ is the minimum number $k\in \N_0$ such that there exist transpositions $\tau^1,\dots,\tau^k$ with $\pi\circ \tau^1\circ\dots \circ\tau^k=\id_n$. In this case we call $\tau^1,\dots,\tau^k$ an $\id$-transposition sequence for $\pi$. \revision{Note that there always exists an $\id$-transposition sequence, as every permutation can be transformed to the identity-permutation by adjacent swaps (cf.\ Bubblesort).}
Let $\pi$ be a permutation with $r>0$ breakpoints. Then $\gl(\pi)\in\Pi_{r-1}$ is formed by ``gluing'' each block together into a single element. Furthermore, if $\pi$ starts with 1 then the first block is removed, and if $\pi$ ends with $n$ then the block at the end is removed (see \cite{christieGenomeRearrangementProblems1998}).
For instance, if $\pi=(3,1,2,8,9,4,5,6,7,10)$, then $\gl(\pi)=(2,1,4,3)$.
Two important lemmata that will be used throughout the paper are given below.
\begin{lem}[\hspace{-0.01pt}\cite{DBLP:journals/siamdm/BafnaP98}]\label{lemma:lowerbound}
    For $\pi\in\Pi_n$ we have $d_t(\pi)\ge \lceil\frac{\blocks(\pi)-1}{3}\rceil$ and $d_t(\pi)\ge \lceil\frac{\breakpoints(\pi)}{3}\rceil$.
\end{lem}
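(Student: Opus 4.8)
The plan is to prove the two lower bounds on the transposition distance $d_t(\pi)$ separately, since each follows from a simple invariant argument: a single transposition can change the relevant count (number of blocks, or number of breakpoints) by only a bounded amount.

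First I would establish the breakpoint bound $d_t(\pi)\ge\lceil\breakpoints(\pi)/3\rceil$. The key observation is that a transposition $\tau(i,j,k)$ only alters the adjacencies in the extended permutation $\pi^e$ at the three ``cut'' positions, namely between positions $i-1$ and $i$, between $j-1$ and $j$, and between $k-1$ and $k$. Every other adjacency is preserved, because the transposition merely reorders three contiguous blocks without touching their internal structure. Since a breakpoint is exactly a bad adjacency (a pair $(x,y)$ with $x+1\ne y$) in $\pi^e$, only these three positions can change status from breakpoint to non-breakpoint or vice versa. Hence one transposition can decrease $\breakpoints(\pi)$ by at most $3$. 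As the identity has $\breakpoints(\id_n)=0$, any $\id$-transposition sequence must have length at least $\breakpoints(\pi)/3$, and rounding up gives the claim.

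Next I would handle the block bound $d_t(\pi)\ge\lceil(\blocks(\pi)-1)/3\rceil$. The argument is analogous: I would show that a single transposition changes $\blocks(\pi)$ by at most $3$. Again this is because only the three cut positions can merge adjacent blocks or split them apart, so each of the three affected boundaries contributes at most one to the change in block count. Since $\blocks(\id_n)=1$, reaching the identity from $\pi$ requires reducing the block count from $\blocks(\pi)$ to $1$, a total decrease of $\blocks(\pi)-1$, and with at most $3$ per step this needs at least $(\blocks(\pi)-1)/3$ transpositions. Taking the ceiling yields the bound. (One can also relate the two counts directly, since for a permutation with at least one breakpoint the number of blocks and the number of breakpoints differ by a small additive constant, but the cleanest route is to prove each inequality by its own invariant.)

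The main obstacle, and the step requiring the most care, is the careful case analysis verifying that exactly three adjacency positions are affected and that each can change the count by at most one. One must be precise about boundary cases: the definition of breakpoint includes the endpoints $i=0$ and $i=n$, corresponding to the sentinel elements $0$ and $n+1$ in $\pi^e$, so the cut at position $k$ may coincide with the right end of the sequence. Handling these degenerate positions correctly—and confirming that a transposition with $k=n+1$ or $i=1$ still touches at most three adjacencies—is where a naive argument could slip. Since this lemma is cited from Bafna and Pevzner~\cite{DBLP:journals/siamdm/BafnaP98}, I would present the invariant argument above as the proof sketch and refer to that source for the full verification of the boundary cases.
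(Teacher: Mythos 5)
Your proposal is correct: the adjacency-counting invariant (a transposition $\tau(i,j,k)$ creates and destroys at most three adjacencies in $\pi^e$, so it can remove at most three breakpoints, and likewise can merge at most three block boundaries) is exactly the standard argument, and your handling of the sentinels $0$ and $n+1$ versus the block count (which ignores them, whence the $-1$ and the base value $\blocks(\id_n)=1$) is right. Note, however, that the paper does not prove this lemma at all --- it is imported verbatim from Bafna and Pevzner \cite{DBLP:journals/siamdm/BafnaP98} --- so there is no in-paper proof to compare against; your sketch coincides with the argument in that cited source.
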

\begin{lem}[\hspace{-0.01pt}\cite{christieGenomeRearrangementProblems1998}]\label{lemma:gluetransdist}
    For $\pi\in\Pi_n$, $d_t(\pi)=d_t(\gl(\pi))$.
\end{lem}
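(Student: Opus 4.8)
The plan is to establish the two inequalities $d_t(\gl(\pi)) \le d_t(\pi)$ and $d_t(\pi) \le d_t(\gl(\pi))$ by showing that a \emph{single} reduction step of the gluing operation never changes the transposition distance, and then iterating. Observe that $\gl(\pi)$ is obtained from $\pi$ by a sequence of elementary steps, each deleting exactly one element: either (R1) when two values $a, a+1$ occupy adjacent positions we delete $a+1$ via $\pi \ominus (a+1)$, merging them inside their block; or (R2) when $\pi_1 = 1$ (resp.\ $\pi_n = n$) we delete this leading (resp.\ trailing) element. Since $\gl(\pi)\in\Pi_{r-1}$, exactly $n-(r-1)$ such deletions are performed, so it suffices to prove that each of them preserves $d_t$.

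For the inequality $d_t(\pi\ominus v)\le d_t(\pi)$ (where $v$ is the deleted value) I would argue by \emph{projection}. Fix an $\id$-transposition sequence $\pi = \pi^{(0)}, \pi^{(1)}, \dots, \pi^{(m)} = \id_n$ of length $m = d_t(\pi)$ and set $\bar\pi^{(t)} = \pi^{(t)} \ominus v$. Then $\bar\pi^{(0)} = \pi\ominus v$ and $\bar\pi^{(m)} = \id_n \ominus v = \id_{n-1}$. The key observation is that deleting a single value maps each transposition to \emph{at most one} transposition: a step $\pi^{(t+1)} = \pi^{(t)}\circ\tau(i,j,k)$ swaps two adjacent contiguous intervals $A$ and $B$, and since $v$ lies in $A$, in $B$, or outside $A\cup B$, the remaining elements still swap two adjacent contiguous intervals after the deletion; the step becomes vacuous only when $v$ was the unique element of $A$ or of $B$. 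Hence $\bar\pi^{(0)},\dots,\bar\pi^{(m)}$, after dropping the vacuous steps, is an $\id$-transposition sequence for $\pi\ominus v$ of length at most $m$.

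For the reverse inequality $d_t(\pi) \le d_t(\pi\ominus v)$ I would use the dual \emph{lifting} construction. Given an optimal $\id$-transposition sequence for $\pi\ominus v$, reinsert the deleted value: in case (R1) the element $a$ is expanded back into the two-element block $(a, a+1)$, and in case (R2) the fixed boundary element is reinserted at the front (resp.\ back). Because the reinserted block is never split — every cut position of a transposition falls either before or after it — each transposition on $\pi \ominus v$ lifts to a genuine transposition on $\pi$ with the block carried along as a unit, keeping the leading/trailing fixed point in place in case (R2). This lifted sequence sorts $\pi$ in exactly as many steps, giving $d_t(\pi)\le d_t(\pi\ominus v)$.

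Combining both inequalities shows that every elementary reduction step preserves $d_t$; iterating over all $n-(r-1)$ steps that transform $\pi$ into $\gl(\pi)$ then yields $d_t(\pi) = d_t(\gl(\pi))$. I expect the main obstacle to be the projection step: one must verify carefully, through the case analysis on the position of $v$ relative to the two swapped intervals $A$ and $B$, that a single deletion really does collapse each transposition to at most one transposition, and one must keep the relabelling induced by the repeated $\ominus$-operations in sync with the block-rank relabelling that defines $\gl$ — a routine but error-prone piece of bookkeeping.
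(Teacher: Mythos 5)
The paper does not actually prove this lemma---it is imported as a black-box citation from Christie's work on genome rearrangements---so there is no in-paper argument to compare against; what you have written is a self-contained reconstruction, and it is essentially correct. Your decomposition of $\gl$ into elementary deletions (merging an adjacent pair $a,a+1$ by deleting $a+1$, or stripping a leading $1$ / trailing $n$) does reproduce $\gl(\pi)$ including the final relabelling, because the $\ominus$-operation already performs the decrement at each step. The projection half, $d_t(\pi\ominus v)\le d_t(\pi)$, coincides exactly with the paper's \cref{lemma:permdelete}, whose proof via \cref{eq:transpositiontransformation} is precisely the exhaustive case analysis you flag as the error-prone bookkeeping step. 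The lifting half is the genuinely new content, and it is sound precisely because you only invoke it for the three special deletions arising in gluing: for an adjacent pair the deleted value corresponds to a single element of the reduced permutation, so every cut of a lifted transposition falls on a block boundary and the pair travels as a unit into $\id_n$; for a stripped boundary element you conjugate by the index shift, which keeps the fixed point in place. One caution worth recording: the paper's \cref{lemma:permshiftto} derives its ``no-increase under shifting'' statement by invoking \cref{lemma:gluetransdist} together with \cref{lemma:permdelete}, so a proof of the present lemma must not in turn appeal to \cref{lemma:permshiftto} on pain of circularity---your argument does not, so it stands as a valid, more elementary replacement for the external citation.
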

A well-studied problem in genome rearrangement is \transsort. It asks for a permutation $\pi$ and an integer $k$, whether $d_t(\pi)\le k$.
It is known that this problem is NP-complete \cite{bulteauSortingTranspositionsDifficult2012}, and the authors even showed the following result which will be used in our paper.
\begin{lem}[\hspace{-0.01pt}\cite{bulteauSortingTranspositionsDifficult2012}]\label{lemma:hardnesslowerbound}
    For $\pi\in\Pi_n$ it is \NP-hard to decide whether $d_t(\pi)=\frac{\breakpoints(\pi)}{3}$.
\end{lem}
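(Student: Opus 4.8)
The plan is to exhibit a polynomial-time reduction from a canonical \NP-hard problem (e.g.\ \textsc{3-Sat}) that is \emph{gap-producing} with respect to the bound of \cref{lemma:lowerbound}. The starting observation is that a single transposition $\tau(i,j,k)$ only rearranges the three cut positions $i-1$, $j-1$, and $k-1$, so it can destroy at most three breakpoints; this is exactly why $d_t(\pi)\ge\breakpoints(\pi)/3$ always holds. Hence, when $\breakpoints(\pi)$ is divisible by three, deciding whether $d_t(\pi)=\breakpoints(\pi)/3$ is the same as deciding whether $\pi$ admits a \emph{perfect} $\id$-transposition sequence in which \emph{every} transposition removes three breakpoints at once. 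I would design the reduction so that satisfiable formulas map to permutations possessing such a perfect sequence, while unsatisfiable formulas map to permutations where at least one transposition is forced to waste a breakpoint, pushing the distance strictly above $\breakpoints(\pi)/3$.

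Concretely, I would build the target permutation out of local gadgets — one per variable and one per clause — linked by ``connector'' segments, arranging the construction so that all cycles of the associated breakpoint (cycle) graph are short; working with such \emph{simple} permutations keeps the number of breakpoints a transposition can eliminate tightly controlled and makes the per-transposition accounting exact. A truth assignment would correspond to a choice, inside each variable gadget, of which of two mutually exclusive families of three-breakpoint moves to begin with; propagating this choice through the connectors should let all clause gadgets be cleared by three-breakpoint moves precisely when the clause is satisfied. To make the statement literally about the equality $d_t(\pi)=\breakpoints(\pi)/3$, I would pad the construction so that $\breakpoints(\pi)$ is divisible by three, and I could invoke \cref{lemma:gluetransdist} to pass to the glued permutation $\gl(\pi)$ whenever this simplifies the bookkeeping without changing the distance.

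The main obstacle is the soundness direction: proving that an \emph{unsatisfiable} instance cannot be sorted in $\breakpoints(\pi)/3$ steps. Completeness (a satisfying assignment yields a perfect sequence) is essentially a matter of exhibiting the moves gadget by gadget, but soundness must rule out \emph{global} shortcuts in which transpositions acting across several gadgets conspire to remove three breakpoints each even though no consistent assignment exists. Controlling these cross-gadget interactions is precisely where the breakpoint-graph machinery and the theory of how transpositions split and merge cycles (the ``3-cut'' analysis underlying \cref{lemma:lowerbound}) are needed, and it is the delicate heart of the argument. I note that once this gap reduction is in place, the plain \NP-completeness of \transsort\ quoted above follows as the special case $k=\breakpoints(\pi)/3$.
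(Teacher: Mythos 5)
First, note that the paper does not prove this statement at all: it is imported verbatim from Bulteau, Fertin, and Rusu~\cite{bulteauSortingTranspositionsDifficult2012}, so there is no in-paper proof to measure your attempt against. Your high-level strategy does in fact mirror the architecture of that cited work --- a gap-producing reduction from \textsc{3-Sat}, built from variable and clause gadgets, in which satisfiability corresponds to the existence of a sorting sequence where every transposition removes the maximum number of breakpoints, with the per-move accounting controlled via the cycle structure of the breakpoint graph. Your opening observation is also correct: a transposition $\tau(i,j,k)$ touches only the three adjacencies at positions $i-1$, $j-1$, $k-1$, so it removes at most three breakpoints, which is exactly the source of the bound $d_t(\pi)\ge \breakpoints(\pi)/3$ in \cref{lemma:lowerbound}, and equality therefore means every move is ``perfect.''

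However, what you have written is a plan, not a proof, and the gap is precisely the part you yourself flag as ``the delicate heart of the argument.'' No gadget is actually specified, so neither direction of the equivalence is established: completeness would require exhibiting, for a satisfying assignment, an explicit sequence of three-breakpoint moves (already nontrivial, since a move that removes three breakpoints must simultaneously close three cycles of the breakpoint graph, which heavily constrains which triples $(i,j,k)$ are usable), and soundness requires ruling out sorting sequences that ignore the intended gadget-by-gadget schedule entirely. The latter is where the real technical work lies --- in~\cite{bulteauSortingTranspositionsDifficult2012} it occupies the bulk of the paper and rests on a careful classification of how transpositions act on the cycles of $3$-permutations, not merely on the observation that each move removes at most three breakpoints. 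As it stands, your argument establishes only that \emph{if} such a reduction exists then the lemma follows; it does not construct one. For the purposes of this paper the correct move is simply to cite the result, as the authors do; if you intend to reprove it, the entire gadget construction and the cross-gadget soundness analysis must be supplied, and padding $\breakpoints(\pi)$ to a multiple of three via \cref{lemma:gluetransdist}-style block insertions is the only step of your outline that is already complete.
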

But there is a simple fixed-parameter algorithm outlined by Mahajan et al.~\cite{mahajanApproximateBlockSorting2006}:
First, if $\breakpoints(\pi)>3k$, we can immediately report that $(\pi,k)$ is a no-instance by \cref{lemma:lowerbound}.
Otherwise, we search for an $\id$-transposition sequence of length at most $k$ for $\gl(\pi)$ (see \cref{lemma:gluetransdist}) using a simple search tree approach. As $\gl(\pi)$ has at most $3k$ elements, there are only $\mathcal{O}((3k)^3)$ possible transpositions. The search-tree depth is at most $k$, as we can perform at most $k$ transpositions. Thus, we can determine in time $\mathcal{O}(n(3k)^{3k})$ if an $\id$-transposition sequence for $\gl(\pi)$ exists, and also report it in the positive case.
This transposition sequence can be easily transformed into an $\id$-transposition sequence for $\pi$ by transposing the blocks of $\pi$ corresponding to the elements in $\gl(\pi)$ for each transposition.
\paragraph*{Trees}
We only consider ordered rooted trees $T$. Let $\rroot(T)$ be the \emph{root} of $T$. Let $\leaves(T)$ be the set of \emph{leaves} of $T$.
For $v\in V(T)$ let $\depth(v)$ be the length of the shortest path between $\rroot(T)$ and $v$ in $T$.
For $v\in V(T)$, let $T(v)$ be the subtree of $T$ rooted at $v$.
For an internal node $v\in V(T)$ let $\child(v)$ be the set of children of $v$ and let $\parent(v)$ be the parent of $v$. Further, let $\ances_T(v)$ be the set of \emph{ancestors} of $v$ in $T$.
For two distinct vertices $v,w\in V(T)$ let $\lca(v,w)$ be the \emph{lowest common ancestor} of $v$ and $w$.
Two vertices $v,w\in V(T)$ are \emph{siblings} if they are children of the same vertex.
If $T$ is a binary tree, then we denote the two children by $\lc(v)$ and $\rc(v)$.

A rooted tree $T$ encodes a set of permutations $\Pi(T)$ of its leaves which can be obtained by permuting children of an inner node: Namely, let $\Pi(T(v))=\revision{\{(v)\}}$ if $v\in \leaves(T)$. If $v\not\in \leaves(T)$, let $\child(v)=\{w_1,\dots,w_k\}$ and we define
\[\Pi(T(v))=\bigcup_{\psi\in \Pi_{k}}\Pi(T(w_{\psi_1}))\star\Pi(T(w_{\psi_2}))\star\dots\star\Pi(T(w_{\psi_k})).\]
If $\pi\in\Pi(T)$, we say that $\pi$ is \emph{consistent} with $T$.

\section{Block Crossings in Tanglegrams}\label{sec:bcr}
In this section we want to properly define the problem we are dealing with.
To reiterate, we are given two trees $T_1$ and $T_2$ with $n$ leaves, a matching between their leaf sets, and a fixed leaf order of $T_1$. Notice, that w.l.o.g.\ we can assume that the leaf sets of both trees are labelled with $[n]$, and that the matching edges are between leaves labelled with the same integer. Further, by relabelling we can assume that the leaf order of $T_1$ is the identity permutation $\id_n$.
Now we want to find (1) a permutation $\pi$ of the leaves of $T_2$ that conforms to the structure of $T_2$, and (2) a sequence of block crossings that route the matching edges from $T_1$ to $T_2$. (1) means that we are looking for a permutation $\pi\in \Pi(T_2)$. As a block crossing only changes the vertical order of two blocks of matching edges, (2) asks for a sequence of swaps of adjacent blocks of edges---which is a purely combinatorial procedure. Further, notice that each block crossing can be modelled by a transposition $\tau$ on the vertical order of edges. Hence, instead of looking for a sequence of block crossings, we equivalently look for a sequence of transposition transforming $\pi$ into the identity permutation. This leads to the following decision variant of our problem, where $T:=T_2$.
\probdef{\revision{\bcrproblong}~(\revision{\bcrprob})}{A rooted tree $T$ with $\leaves(T)=[n]$ and a positive integer $k$.}{Is there a permutation $\pi\in\Pi(T)$ such that $d_t(\pi)\le k$?}
Our algorithms will produce a witness in case of a YES-instance, that is, a permutation $\pi\in \Pi(T)$, and a $\id$-transposition sequence $\tau^1,\dots,\tau^\ell$ of $\pi$ with $\ell\le k$. In the following sections we will investigate the algorithmic complexity of this problem. We start with results that assume that $T$ is a binary tree.
\section{NP-Hardness}\label{sec:hardness}
\revision{\bcrprob}\ implicitly contains as a subproblem to sort a permutation by a sequence of few transpositions. As \transsort\ is \NP-complete \cite{bulteauSortingTranspositionsDifficult2012}, this suggests that \revision{\bcrprob}\ is also \NP-complete. We show this for the restricted case, where the input tree is complete and binary. The proof, however, is not as straight-forward as the relation between \revision{\bcrprob}\ and \transsort\ might suggest. 
The main idea is to construct for an input permutation $\pi$ a tree $T$ such that there exists $\pi'\in \Pi(T)$ with $\gl(\pi)=\gl(\pi')$, and $\pi'$ is the only permutation in $\Pi(T)$ that minimizes the number breakpoints. As it is already \NP-hard to decide whether $d_t(\pi)$ equals the lower bound $\breakpoints(\pi)/3$ (\cref{lemma:hardnesslowerbound}), this shows \NP-hardness for \revision{\bcrprob}.
\begin{restatable}{theorem}{thmnpcomp}\label{thm:NPcomp}
    \revision{\bcrprob}\ is \NP-complete for complete binary trees.
\end{restatable}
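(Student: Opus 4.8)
The plan is to reduce from the problem of deciding whether $d_t(\pi) = \breakpoints(\pi)/3$, which is \NP-hard by \cref{lemma:hardnesslowerbound}. Membership in \NP\ is immediate: a witness consists of a permutation $\pi \in \Pi(T)$ together with an $\id$-transposition sequence of length at most $k$, both of which are polynomial in size and verifiable in polynomial time. The real work is the hardness direction, so I focus on that.

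Given an input permutation $\pi \in \Pi_n$, the goal is to build a complete binary tree $T$ and an integer $k$ such that $(T,k)$ is a YES-instance of \revision{\bcrprob}\ if and only if $d_t(\pi) = \breakpoints(\pi)/3$. Following the hint in the excerpt, I want $T$ to enforce that the \emph{only} leaf orders consistent with $T$ that matter are essentially copies of $\pi$ at the ``block level''. Concretely, the target is to construct $T$ so that there is a distinguished permutation $\pi' \in \Pi(T)$ with $\gl(\pi') = \gl(\pi)$, and such that $\pi'$ is the unique minimizer of $\breakpoints(\cdot)$ over all of $\Pi(T)$. By \cref{lemma:gluetransdist} we then have $d_t(\pi') = d_t(\gl(\pi')) = d_t(\gl(\pi)) = d_t(\pi)$, so minimizing transposition distance over $\Pi(T)$ amounts to solving \transsort\ on $\pi$, provided the tree does not accidentally offer a cheaper permutation.

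The key construction step is therefore to encode each \emph{block} of $\pi$ as a rigid gadget subtree whose leaves can only be ordered in one contiguous sensible way, and to arrange these gadgets under a complete binary tree skeleton so that their relative order can be freely permuted (mirroring the freedom of $\gl(\pi)$) but the internal order within a block is forced. To make $T$ complete and binary, I would pad the number of blocks up to a power of two and pad each gadget to a common depth with ``filler'' leaves that are matched to create additional forced breakpoints if the intended internal order is violated. The crucial invariant I need to verify is a \textbf{breakpoint lower bound}: every $\psi \in \Pi(T)$ satisfies $\breakpoints(\psi) \ge \breakpoints(\pi')$, with equality forcing $\psi$ to respect all the gadgets and hence $\gl(\psi) = \gl(\pi)$. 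Setting $k := \breakpoints(\pi')/3$ (an integer after the padding), the lower bound $d_t(\psi) \ge \lceil \breakpoints(\psi)/3 \rceil$ from \cref{lemma:lowerbound} guarantees that any $\psi$ achieving $d_t(\psi) \le k$ must have minimum breakpoints and thus glue to $\gl(\pi)$, collapsing the question back to whether $d_t(\pi)$ meets its breakpoint lower bound.

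\textbf{The main obstacle} I anticipate is the rigidity argument: showing that deviating from the intended block-internal order strictly increases the breakpoint count by enough that no permutation outside the intended family can beat the target $\breakpoints(\pi')$. This requires carefully choosing the labels of the filler leaves so that any ``wrong'' adjacency created by permuting children inside a gadget introduces at least one new breakpoint, while correct orders introduce none beyond those already present in $\pi$. A secondary technical point is making the padding consistent: I must ensure the completeness of $T$ and the power-of-two rounding do not change $\breakpoints(\pi')$ modulo $3$ in a way that breaks the clean equation $k = \breakpoints(\pi')/3$, which I would handle by padding with blocks that contribute breakpoints in multiples of three. Once these counting lemmas are in place, the equivalence between the \revision{\bcrprob}\ instance and the \transsort\ tightness question follows directly from Lemmas~\ref{lemma:lowerbound} and~\ref{lemma:gluetransdist}.
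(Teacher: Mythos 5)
Your high-level strategy matches the paper's: reduce from deciding whether $d_t(\pi)=\breakpoints(\pi)/3$ (\cref{lemma:hardnesslowerbound}), build a complete binary tree $T$ containing a distinguished $\pi'$ with $\gl(\pi')=\gl(\pi)$ that uniquely minimizes the number of breakpoints over $\Pi(T)$, and close the argument with \cref{lemma:lowerbound} and \cref{lemma:gluetransdist}. The gap is in your construction. You propose one rigid gadget per \emph{block} of $\pi$, hung under a skeleton ``so that their relative order can be freely permuted (mirroring the freedom of $\gl(\pi)$).'' This is self-defeating: the blocks of $\pi$ are intervals of consecutive integers that tile $[n]$, so if their relative order were free, $\Pi(T)$ would contain the arrangement of the blocks in increasing order, i.e.\ (essentially) the identity permutation, which has far fewer breakpoints than $\pi'$ and transposition distance $0$. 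Your ``crucial invariant'' $\breakpoints(\psi)\ge\breakpoints(\pi')$ for all $\psi\in\Pi(T)$ then fails, and every instance becomes a YES-instance. The freedom associated with $\gl(\pi)$ in \transsort\ lives in the transposition sequence applied \emph{after} the leaf order is chosen, not in $\Pi(T)$; the tree must emphatically \emph{not} reproduce it. (A secondary issue: a complete binary tree cannot realize all permutations of an arbitrary number of variable-size gadgets anyway, and padding gadgets of different sizes to a common depth while keeping the leaf set equal to $[n]$ is itself problematic.)

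The paper sidesteps all of this by working at the \emph{element} level rather than the block level: after padding $\pi$ to length $2^p-1$ by inflating its last element into a run of consecutive values (which changes neither $\gl$ nor the breakpoint count), each $\pi_i$ becomes the sibling leaf pair $(2\pi_i,2\pi_i+1)$ in a complete binary tree, with extra leaves $1$ and $2^{p+1}$ at the two extremes. The in-order permutation $\pi^I$ of this tree satisfies $\gl(\pi^I)=\gl(\pi)$ and $\breakpoints(\pi^I)=\breakpoints(\pi)$, and the rigidity argument (\cref{claim:manybreakpoints}) is a local counting argument: destroying a breakpoint at an odd element $x$ forces flipping the parent of $x$ and $x+1$, which creates new breakpoint elements $x-1$ and $x+1$, so every deviation from $\pi^I$ strictly increases the breakpoint count. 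If you want to repair your write-up, replace the block-level gadgets by this element-doubling trick and drop the requirement that anything be freely permutable; your concluding appeal to \cref{lemma:lowerbound} and \cref{lemma:gluetransdist} then goes through as you describe.
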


Let $\pi$ be an arbitrary permutation consisting of $2^p-1$ elements where $p\in\mathbb{N}$. We define a labelled complete binary tree $T$ on $2^{p+1}$ leaves. We label the leaves of $T$ such that the leftmost leaf is labelled $1$, and the rightmost leaf is labelled $2^{p+1}$. The remaining leaves are labelled from left to right by
\[2\pi_1,2\pi_1+1, 2\pi_2,2\pi_2+1,\dots,2\pi_{2^p-2},2\pi_{2^p-2}+1,2\pi_{2^p-1},2\pi_{2^p-1}+1.\]
The leaves of $T$ define the permutation $\pi^I=\pi^I(T)$ as follows. 
We have $\pi^I(T)=\pi^I(\rroot(T))$ where
\[\pi^I(v)=\begin{cases}(v),                             & \text{if }v\text{ is a leaf}, \\
             \pi^I(\lc(v))\star\pi^I(\rc(v)), & \text{otherwise.}
\end{cases}\]
Notice that $\breakpoints(\pi^I)=\breakpoints(\pi)$. Refer to \cref{fig:nphardexample} for an example of $T$.
\begin{figure}[t]
    \centering
    \includegraphics{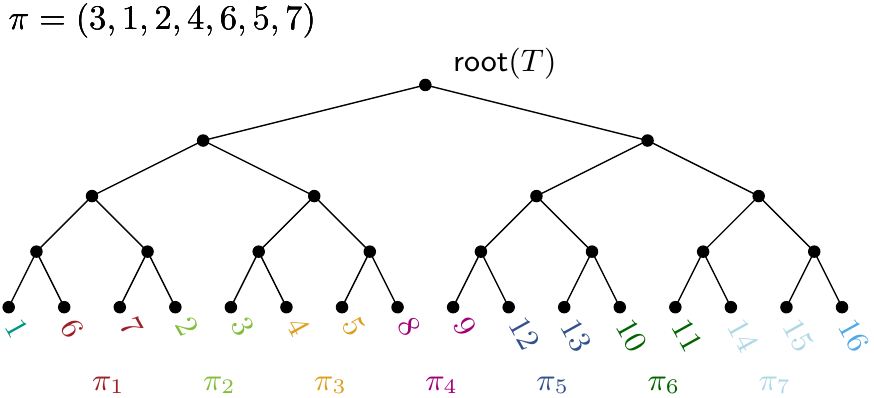}
    \caption{An example for $T$ in the proof of Theorem~\ref{thm:NPcomp} depicting the permutation $\pi=(3,1,2,4,6,5,7)$.}
    \label{fig:nphardexample}
\end{figure}

Next we show that $\pi^I$ has the minimum number of breakpoints out of all the permutations in $\Pi(T)$.
\begin{lem}\label{claim:manybreakpoints}
    For all $\pi^\prime\in \Pi(T)$ such that $\pi^\prime\ne \pi^I$ we have that $\breakpoints(\pi^\prime)>\breakpoints(\pi^I)$.
\end{lem}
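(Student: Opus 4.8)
My plan is to recast the breakpoint count of a layout as a sum of local contributions, one per internal node of $T$, read off the inequality by keeping only the unavoidable contributions, and then fight for strictness. For any $\sigma\in\Pi(T)$ the extended permutation $\sigma^e$ has $2^{p+1}+1$ consecutive pairs. Each of the $2^{p+1}-1$ internal nodes $x$ of the complete binary tree owns exactly one of them, the \emph{junction} of $x$, i.e.\ the consecutive pair of $\sigma$ straddling the split of $T(x)$; the remaining two pairs are the sentinel pairs $(0,\sigma_1)$ and $(\sigma_{2^{p+1}},2^{p+1}+1)$. This gives
\[\breakpoints(\sigma)=\sum_{x\text{ internal}}\bigl[\text{junction of }x\text{ is a breakpoint}\bigr]+[\sigma_1\ne 1]+[\sigma_{2^{p+1}}\ne 2^{p+1}].\]
I would split the internal nodes into the $2^p$ \emph{bottom} nodes (parents of two sibling leaves) and the $2^p-1$ \emph{upper} nodes (height $\ge 2$); note that each upper node splits exactly one ``designed'' pair $(2k,2k+1)$, which is precisely its junction value in $\pi^I$.

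For the inequality, the two children of a bottom node $v_m$ are a fixed pair of leaves and are \emph{always} adjacent in $\sigma$; its junction is a breakpoint unless the two labels are consecutive integers and $v_m$ is unflipped. Call $v_m$ \emph{good} in the former case and \emph{bad} otherwise, so a bad bottom node is a breakpoint in every layout. In $\pi^I$ every upper junction (being a designed pair) and both sentinels are adjacencies, hence $\breakpoints(\pi^I)$ equals the number of bad bottom nodes, which by the identity $\breakpoints(\pi^I)=\breakpoints(\pi)$ noted above is $\breakpoints(\pi)$. Dropping the nonnegative upper and sentinel terms yields $\breakpoints(\sigma)\ge \#\{\text{bad bottom nodes}\}=\breakpoints(\pi^I)$, attained by $\pi^I$. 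Equality therefore forces every upper junction to be an adjacency, every good bottom node to be unflipped, and both sentinels to be adjacencies; the latter says leaf $1$ is first and leaf $2^{p+1}$ is last, so the left and right spines are unflipped.

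The uniqueness engine I would use is that the labeling is \emph{parity preserving}: in $\pi^I$ a leaf carries an even label iff it sits at an even position. Flipping an upper node shifts a block by an even amount and preserves this matching, whereas flipping a bottom node swaps its even and odd child and toggles it; hence in any $\sigma$ a leaf's label-parity equals its position-parity exactly when its parent bottom node is unflipped. Since an upper node of height $\ge 2$ splits between an even and an odd position, an adjacency at its junction (which needs opposite label-parities) is possible only if the two flanking leaves have parents that are both unflipped or both flipped. Two consequences drive the argument. First (no upper flips, once the bottom nodes are known unflipped): a lowest flipped upper node, with all descendants unflipped, exposes the junction $(2\pi_{t_2},2\pi_{t_0}+1)$, which is never a pair of consecutive integers — a breakpoint, contradicting equality. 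Second (no bottom flips): if some bottom node is flipped, take a maximal run $v_a,\dots,v_b$ of flipped (necessarily bad) bottom nodes; at its left boundary the upper node fed by the unflipped $v_{a-1}$ and the flipped $v_a$ is forced to expose two \emph{even} labels $(2\pi_{a-1},2\pi_a)$, an equal-parity clash and hence a breakpoint — again a contradiction. Together these give that $\sigma$ is flip-free, i.e.\ $\sigma=\pi^I$.

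The hard part will be Step A, and more precisely controlling the \emph{exposed fringe leaves} at those boundary upper nodes when further flips are present. The obstacle is genuine: a single bottom flip can locally repair an upper junction by substituting a ``stray'' odd–even adjacency for its designed even–odd pair, so the per-node bound stays tight and no naive single-flip exchange loses a breakpoint — count-conservation does not by itself force $\sigma=\pi^I$. I expect to resolve this with the parity bookkeeping above, observing that each flipped bottom node displaces its even leaf onto an odd position and that these parity defects cannot all be cancelled along a run without producing the equal-parity clash at the run's boundary, organized as an induction that repeatedly peels a lowest flipped node — which, by the junction computation, can never be an upper node and at a run boundary can never be repaired — thereby pinning down the fringe and closing the argument.
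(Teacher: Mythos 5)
Your argument is correct in substance but takes a genuinely different route from the paper. The paper argues by a local exchange count: it lets $X$ be the breakpoint elements of $\pi^I$ that are repaired in $\pi^\prime$ and $Y$ those newly created, observes that every $x\in X$ is odd and a left child so that repairing it forces both bottom parents to flip and puts $x-1$ and $x+1$ into $Y$, and concludes $|Y|\ge|X|+1$ from the oddness of $X$. You instead decompose $\breakpoints(\sigma)$ over the junctions owned by internal nodes plus the two sentinels, lower-bound it by the number of bad bottom nodes (which equals $\breakpoints(\pi^I)$), and then show the equality case is rigid via the parity invariant ``label parity equals position parity iff the bottom parent is unflipped.'' Your route buys a cleaner structural picture and, notably, handles uniformly the case where $\pi^\prime$ repairs no breakpoint of $\pi^I$ yet still differs (the case $X=\emptyset$, which the paper's induction, starting at $|X|=1$, treats only implicitly); the paper's route is shorter and avoids the equality analysis altogether. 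One step of your sketch needs the repair you already anticipate: the ``maximal run'' of flipped bottom nodes must be taken in $\sigma$-position order, not in $\pi^I$-order, since upper flips can reorder which bottom pairs are adjacent. Once phrased that way the argument closes immediately and needs no induction on runs: in $\sigma$ the bottom pairs occupy positions $\{2m-1,2m\}$, so every even junction is flanked by leaves of two position-consecutive bottom nodes, and your parity clash shows an adjacency there forces equal flip status; chaining over $m$ gives that \emph{all} bottom nodes share one flip status, and the sentinel (leaf $1$ at position $1$) forces that status to be unflipped. Your lowest-flipped-upper-node computation then finishes the job, so the overall plan is sound.
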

\begin{proof}
    Let $\pi^\prime\in \Pi(T)$ with $\pi^\prime\ne \pi^I$ be arbitrary.
    Let $X=\{x_1,\dots,x_\ell\}$ be the set of breakpoint elements in $\pi^I$ that do not appear in $\pi^\prime$. Furthermore, let $Y=\{y_1,\dots, y_m\}$ be the set of breakpoint elements that appear in $\pi^\prime$ but not appear in $\pi^I$. We will prove that $m>\ell$ by showing that each breakpoint element that is present in $X$ attributes to some breakpoint elements being in $Y$.

    First notice that all elements $x\in X$ are odd and are the left child of its parent.
    As $x$ is not a breakpoint element in $\pi^\prime$ it has to appear before $x+1$ in $\pi^\prime$, where $x+1$ is even and the right child of its parent.
    Thus, in $\pi^\prime$ the parent of $x$ and $x+1$ has its children appear in reversed order.
    This means that $x-1$ and $x+1$ are breakpoint elements in $Y$ (for the case of $x=1$, recall that 0 can also be a breakpoint element).
    For an example see \cref{fig:nphardexample}: If $9\in X$ then it has to appear before $10$ in $\pi^\prime$. Thus, the parents of 9 and 10 have to be flipped, resulting in 8 and 10 being breakpoint elements in $\pi^\prime$.

    Now we have that for any set of elements $X$ being all odd, $Y\supseteq \{x-1,x+1\mid x\in X\}$. It follows that $|X|<|Y|$ because if $|X|=1$ then $|Y|=2$ and otherwise $|Y|\ge |X|+1$ by induction on the size of $|X|$. This is what we wanted to show. \qed
\end{proof}
\begin{proof}[Proof of Theorem~\ref{thm:NPcomp}]
    \NP-membership is immediate as we are dealing with a purely combinatorial problem with polynomial-size witnesses.
    For a permutation $\pi$ it is \NP-hard to decide whether $d_t(\pi)=\frac{\breakpoints(\pi)}{3}$ \cref{lemma:hardnesslowerbound}.
    We will show that, given an arbitrary permutation $\pi$, we can find an instance $(T,k)$ of \revision{\bcrprob} such that $d_t(\pi)=\frac{\breakpoints(\pi)}{3}$ if and only if $(T,k)$ is a \emph{YES}-instance of \revision{\bcrprob}. This is enough to show that $\revision{\bcrprob}$ is NP-hard.

    Let $\pi$ be an arbitrary permutation. By \cref{lemma:gluetransdist} we may assume that $\pi$ is of length $2^p-1$ for some $p\in\mathbb{N}$: If $\pi$ is not of length $2^p-1$, e.g., of size $2^p-1-\ell$, we simply replace the last element $e$ of $\pi$ by the block $e,e+1,\dots, e+\ell$ and replace $e^\prime$ by $e^\prime+\ell$ for all other $e^\prime>e$. We have that the new permutation $\pi^a$ has the same number of breakpoints, and $d_t(\pi)=d_t(\pi^a)$ as $\gl(\pi)=\gl(\pi^a)$. Thus, deciding $d_t(\pi)=\frac{\breakpoints(\pi)}{3}$ is equivalent to deciding $d_t(\pi^a)=\frac{\breakpoints(\pi)}{3}$. By choosing $p$ appropriately, $\pi^a$ has at most double the size of $\pi$.
    Henceforth, we assume that $\pi$ is of size $2^p-1$.

    Let $T$ be the complete binary tree as described above and let $k=\frac{\breakpoints(\pi)}{3}=\frac{\breakpoints(\pi^I(T))}{3}$. We now show the stated equivalence.
    If $d_t(\pi)=\frac{\breakpoints(\pi)}{3}$ then $(T,k)$ is a \emph{YES}-instance as $d_t(\pi^I(T))=d_t(\pi)=\frac{\breakpoints(\pi)}{3}=k$. The equality $d_t(\pi^I(T))=d_t(\pi)$ holds as $\gl(\pi^I(T))=\gl(\pi)$.

    Conversely, if $d_t(\pi)\ne\frac{\breakpoints(\pi)}{3}$ then $d_t(\pi^I(T))> k$ (the distance cannot be smaller by \cref{lemma:lowerbound}).
    For all other $\pi^\prime\in\Pi(T)$ with $\pi^\prime\ne\pi^I(T)$ we have that $\breakpoints(\pi^\prime)>\breakpoints(\pi^I(T))$ by \cref{claim:manybreakpoints}.
    By \cref{lemma:lowerbound}, $d_t(\pi^\prime)\ge \lceil \frac{\breakpoints(\pi^\prime)}{3}\rceil>\frac{\breakpoints(\pi^I(T))}{3}=k$. Summarizing,  we have that $\forall \pi\in \Pi(T):d_t(\pi)>k$ and $(T,k)$ is a \emph{NO}-instance. \qed
\end{proof}

\section{Approximation}\label{sec:approx}
We present two approximation algorithms for \revision{\bcrprob}\ on binary trees. These rely on a polynomial algorithm by Walter and Dias~\cite{DBLP:conf/spire/WalterDM00} that sorts a permutation $\pi$ of length $n$ by at most $\frac{3}{4}\breakpoints(\pi)$ transpositions in time $\mathcal{O}(n^2)$. Together with the lower bound of \cref{lemma:lowerbound} this implies a 2.25-approximation algorithm for \transsort\footnote{There exists a better approximation algorithm based on breakpoints leading to a 2-approximation, but the authors do not state any runtime \cite{DBLP:journals/dm/ErikssonEKSW01}.}. 
Hence, our algorithms need to find the permutation $\pi\in \Pi(T)$ that minimizes the number of breakpoints. Applying the algorithm of Walter and Dias to this permutation gives a 2.25-approximation for \revision{\bcrprob}. 
We present two algorithms for minimizing the number of breakpoints. In the case of binary trees we present an $\mathcal{O}(n^3)$ algorithm, in the case of complete binary trees we can improve this to $\mathcal{O}(n^2)$. In both algorithms we start by discussing how to minimize the number of blocks instead of breakpoints, then we show how the algorithms can be adapted for breakpoints. Note that the number of blocks and breakpoints differ by at most one for any permutation, hence, minimizing blocks also leads to similar approximations.

\paragraph*{Binary trees.}
We start with an algorithm for binary trees. Essentially, finding $\pi\in \Pi(T)$ with minimal blocks is closely to the following previously studied problem:
Given a complete binary tree $T$ with $n$ leaves and an arbitrary distance function $d:\leaves(T)\times \leaves(T)\to \mathbb{R}$ find $\pi\in \Pi(T)$ that minimizes $\sum_{i=1}^{n-1}d(\pi_i,\pi_{i+1})$. Defining $d(i,j)=(1-\delta_{i+1,j})$ for $i,j\in\leaves(T)$ exactly captures our problem (note that $d$ is not symmetric). Bar-Joseph et al.~\cite{barjosephKaryClusteringOptimal2003} gave an algorithm in the case where $T$ is a binary tree and the distance function $d$ is symmetric. We cannot directly apply their algorithm, so we show an adaptation to our problem, large parts of the algorithm are the same.
The main idea is to do a bottom up dynamic program that computes the optimal ordering for the leaves of a subtree when the leftmost and rightmost leaf is fixed.
\begin{restatable}{theorem}{thmnthirdapprox}\label{thm:nthirdapprox}
    Let $T$ be a rooted binary tree with $\leaves(T)=[n]$.
    Then a permutation $\pi\in\Pi(T)$ minimizing the number of blocks (breakpoints) can be computed in $\mathcal{O}(n^3)$ time and $\mathcal{O}(n^2)$ space.
\end{restatable}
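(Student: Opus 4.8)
The plan is to phrase block minimisation as an optimal-arrangement problem and solve it with a bottom-up dynamic program over the subtrees of $T$, in the spirit of the optimal-leaf-ordering algorithm of Bar-Joseph et al., but adapted to the \emph{asymmetric} cost $d(i,j)=1-\delta_{i+1,j}$. First I would record the elementary identity $\blocks(\pi)=1+\sum_{i=1}^{n-1} d(\pi_i,\pi_{i+1})$: a summand is $1$ exactly at the internal breaks, of which there are $\blocks(\pi)-1$. Hence minimising blocks over $\Pi(T)$ is exactly minimising $\sum_i d(\pi_i,\pi_{i+1})$ over orderings consistent with $T$.

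The core object is a table $M(v,l,r)$ giving the minimum of $\sum_i d(\cdot,\cdot)$ over all orderings of $\leaves(T(v))$ consistent with $T(v)$ whose leftmost leaf is $l$ and rightmost leaf is $r$. For a leaf, $M(v,v,v)=0$. For an internal node $v$ with children $a=\lc(v)$ and $b=\rc(v)$, the leftmost and rightmost leaves must lie in \emph{different} child subtrees, so either $T(a)$ is placed entirely left of $T(b)$ or vice versa. In the first case ($l\in\leaves(T(a))$, $r\in\leaves(T(b))$) the only new adjacency is between the rightmost leaf $p$ of $T(a)$ and the leftmost leaf $q$ of $T(b)$, contributing $d(p,q)$, so
\[ M(v,l,r)=\min_{p\in\leaves(T(a)),\,q\in\leaves(T(b))}\big[M(a,l,p)+d(p,q)+M(b,q,r)\big], \]
and symmetrically when $l\in\leaves(T(b))$, $r\in\leaves(T(a))$. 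Because $d$ is asymmetric, it is essential that $M$ is indexed by the \emph{ordered} pair $(l,r)$; since all internal orderings are re-optimised, the recurrence then handles reflections of a subtree automatically. The minimum number of blocks is $1+\min_{l,r}M(\rroot(T),l,r)$.

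A direct evaluation of the recurrence is $\Theta(n^4)$, so the main step — and the main obstacle — is the speed-up to $\mathcal{O}(n^3)$. Here I factor the double minimisation: fixing the split leaf $p$ in $T(a)$, define $C(p,r)=\min_{q}\big[d(p,q)+M(b,q,r)\big]$ and then $M(v,l,r)=\min_p\big[M(a,l,p)+C(p,r)\big]$. Writing $n_u=|\leaves(T(u))|$, computing $C$ costs $\mathcal{O}(n_a n_b^2)$ and the outer minimisation $\mathcal{O}(n_a^2 n_b)$, so node $v$ costs $\mathcal{O}(n_a n_b\,n_v)$ with $n_v=n_a+n_b$. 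Summing over all internal nodes and using that $\sum_v n_a n_b=\binom{n}{2}$ (each leaf pair is counted exactly once, at its $\lca$), the total running time is $\mathcal{O}\!\big(n\sum_v n_a n_b\big)=\mathcal{O}(n^3)$.

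For space I would exploit that $M(v,\cdot,\cdot)$ is meaningful only for the $2n_a n_b$ ordered pairs with $l,r$ in different children; storing just these sparse tables across all nodes costs $\sum_v 2n_a n_b=\mathcal{O}(n^2)$ in total, the transient tables $C$ add an $\mathcal{O}(n^2)$ peak, and keeping one back-pointer per meaningful entry (the optimal $p,q$ and the chosen child order) also fits in $\mathcal{O}(n^2)$, so the minimiser $\pi$ is recovered by a single $\mathcal{O}(n)$ top-down walk. Finally I adapt the program to breakpoints via $\breakpoints(\pi)=\blocks(\pi)-1+[\pi_1\ne 1]+[\pi_n\ne n]$: the table $M$ is unchanged and only the root objective becomes $\min_{l,r}\big(M(\rroot(T),l,r)+[l\ne 1]+[r\ne n]\big)$, since at the root $l=\pi_1$ and $r=\pi_n$. \qed
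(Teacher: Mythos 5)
Your proposal is correct and follows essentially the same route as the paper: a bottom-up dynamic program indexed by the leftmost and rightmost leaves of each subtree (evaluated at their lowest common ancestor), the same factorization of the double minimization via an intermediate table to go from $\mathcal{O}(n^4)$ to $\mathcal{O}(n^3)$, and the same root-level adjustment to switch from blocks to breakpoints. Your accounting of the running time via $\sum_v n_a n_b = \binom{n}{2}$ is a slightly more explicit version of the paper's per-pair argument, but the algorithm is the same.
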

\begin{proof}
    We show how to find the permutation minimizing the number of blocks, and discuss how to minimize the number of breakpoints at the end of the proof.
    We proceed as in \cite{barjosephKaryClusteringOptimal2003} by giving a bottom-up dynamic program.
    We compute for each vertex $v\in V(T)$ and each pair of leaves $i,j\in \leaves(T)$ such that $v=\lca(i,j)$ the value $B(v,i,j)$ that is the minimum number of blocks for a permutation of $\leaves(T(v))$ such that $i$ is the first element and $j$ is the last element of that permutation. This is done by bottom-up dynamic programming.
    Clearly $B(v,v,v)=1$ for $v\in\leaves(T)$.
    Then for all other $v\in V(T)$ such that $w=\lc(v)$, $x=\rc(v)$, $i\in \leaves(T(w))$, and $j\in \leaves(T(x))$,
    \[B(v,i,j)=\min\{B(w,i,h)+B(x,\ell,j)+(1-\delta_{h+1,\ell})\mid h\in \leaves(T(w)),\ell\in\leaves(T(x))\}.\]
    The value $B(v,j,i)$ is computed equivalently by exchanging the roles of $w$ and $x$.
    It is easy to see that
    \[\min_{\pi^\prime\in\Pi(T)}\blocks(\pi^\prime)=\min \{B(\rroot(T),i,j)\mid i,j\in \leaves(T),\lca(i,j)=\rroot(T)\},\]
    and a permutation $\pi$ that achieves this number of blocks can also be obtained by saving $h\in\leaves(T(w))$ and $\ell\in\leaves(T(x))$ at $v$ that minimize $B(v,i,j)$.

    A straight-forward way of computing $B(v,i,j)$ would result in an $\mathcal{O}(n^4)$ algorithm, but there is a faster way, as described by Bar-Joseph et al.~\cite{barjosephKaryClusteringOptimal2003}. To compute $B(v,i,j)$ we save the intermediate values $\text{Temp}(i,\ell)$ which are computed as
    \[\text{Temp}(i,\ell)=\min_{h\in T(w)}B(w,i,h)+(1-\delta_{h+1,\ell}).\]
    Then we can compute $B(v,i,j)$ as
    \[B(v,i,j)=\min_{\ell\in T(x)}\text{Temp}(i,\ell)+B(x,\ell,j).\]
    Both computations take $\mathcal{O}(n)$ time, and $B(v,j,i)$ can be computed similarly.
    For each pair $i,j\in \leaves(T)$, the values $B(v,i,j)$ and $B(v,j,i)$ are only computed for one $v=\lca(i,j)$, and the computation takes $\mathcal{O}(n)$ time.
    Thus, the overall runtime is $\mathcal{O}(n^3)$, and the overall space complexity is $\mathcal{O}(n^2)$.

    The above algorithm can be adapted to minimize the number of breakpoints instead of the number of blocks: After the algorithm is completed we can look at all values $B(\rroot(T), i,j)$. If $i$ is not $1$, we increase the value by one. If $j$ is not $n$, we increase the value by one. From these new $B(\rroot(T), i,j)$-values we can find the minimum number of breakpoints. The permutation $\pi\in\Pi(T)$ achieving this number of breakpoints can be obtained equivalently as above. \qed
\end{proof}

\paragraph*{Complete binary trees.}
If we are dealing with complete binary trees we can give a faster algorithm for minimizing blocks (breakpoints). This algorithm is based on an algorithm by Brandes~\cite{brandesOptimalLeafOrdering2007} for the problem of finding $\pi\in\Pi(T)$ that minimizes $\sum_{i=1}^{n-1}d(\pi_i,\pi_{i+1})$ as described above. The algorithm of Brandes solves the problem for complete binary trees in $\mathcal{O}(n^2\log n)$ time and $\mathcal{O}(n)$ space.
This already gives an algorithm for our problem on complete binary trees $T$ by setting $d(i,j)=(1-\delta_{i+1,j})$ for $i,j\in\leaves(T)$. But as we are not dealing with an arbitrary distance function, we can find an even faster algorithm.

\begin{theorem}\label{thm:nsquard}
    Let $T$ be a complete rooted binary tree with $\leaves(T)=[n]$ s.t.\ $n=2^k$.
    Then a permutation $\pi\in\Pi(T)$ minimizing the number of blocks (breakpoints) can be computed in $\mathcal{O}(n^2)$ time and $\mathcal{O}(n\log n)$ space.
\end{theorem}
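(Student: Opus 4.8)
The plan is to run the same bottom-up dynamic program as in \cref{thm:nthirdapprox}, computing the values $B(v,i,j)$, but to exploit the special distance $d(h,\ell)=1-\delta_{h+1,\ell}$ in order to bring the cost down to $\mathcal{O}(n^2)$. First I would observe that in a complete binary tree the number of triples $(v,i,j)$ with $v=\lca(i,j)$ is exactly $n^2-n$: summing the $2\cdot 4^{h-1}$ ordered endpoint pairs over the $2^{k-h}$ nodes of each height $h$ telescopes to $\Theta(n^2)$. Hence there are only $\Theta(n^2)$ table entries, and the sole obstacle to an $\mathcal{O}(n^2)$ running time is the $\Theta(n)$ minimization performed per entry in the recurrence of \cref{thm:nthirdapprox}; the goal is to make it amortized $\mathcal{O}(1)$.

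Next I would split the recurrence. Writing $w=\lc(v)$, $x=\rc(v)$, $L(w,i)=\min_h B(w,i,h)$, and $R(x,j)=\min_\ell B(x,\ell,j)$, every summand with $\ell\ne h+1$ contributes the constant distance $1$, so the non-adjacent part of the minimum is simply $1+L(w,i)+R(x,j)$ and can be read off in $\mathcal{O}(1)$ once the arrays $L(w,\cdot)$ and $R(x,\cdot)$ are precomputed. The only remaining summands are the \emph{adjacency} terms $B(w,i,h)+B(x,h+1,j)$ with $h\in S_v$, where $S_v$ is the set of $h\in\leaves(T(w))$ whose successor $h+1$ lies in $\leaves(T(x))$ (together with the symmetric set for the flipped child order). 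The crucial point is that each consecutive pair $(c,c+1)$ is split at exactly one node, namely $\lca(c,c+1)$, so $\sum_v|S_v|=n-1$ over the whole tree. A second key observation is that an adjacency term can beat the generic value by at most $1$, and does so precisely when both factors are individually optimal, i.e.\ $B(w,i,h)=L(w,i)$ and $B(x,h+1,j)=R(x,j)$; hence $B(v,i,j)$ equals $1+L(w,i)+R(x,j)$ minus an indicator bit.

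These two facts reduce the work dramatically. For the upward propagation I only need the minima $L(v,\cdot)$ and $R(v,\cdot)$, not the full table; and for $L(v,i)=\min_j B(v,i,j)$ the right-hand condition collapses to a scalar test: the discount is available for some $j$ attaining $\min_\ell R(x,\ell)$ iff $L(x,h+1)=\min_\ell R(x,\ell)$, which uses only the already-computed summary $L(x,\cdot)$. Storing the size-$|\leaves(T(v))|$ arrays $L(v,\cdot),R(v,\cdot)$ at every node costs $\sum_v|\leaves(T(v))|=\mathcal{O}(n\log n)$ space, matching the claimed bound, and no $\Theta(n^2)$-sized table is ever materialized.

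The main obstacle I expect is the complementary \emph{left-side} test: for every qualifying adjacency $h$ I must mark all first-leaves $i$ with $B(w,i,h)=L(w,i)$, and this genuinely depends on the column $B(w,\cdot,h)$, which I cannot afford to store. The resolution is to recompute each of the $\mathcal{O}(n)$ needed columns on the fly from the stored $L,R$ summaries, by walking the path from $h$ up to $w$ and combining, at each junction, the pinned partial order with the sibling's precomputed summary array; the junction corrections encountered are themselves handled by the same bounded-improvement principle. Charging $\mathcal{O}(|\leaves(T(v))|)$ to each split-adjacency at $v$ yields $\sum_v|S_v|\cdot|\leaves(T(v))|\le (n-1)\,n=\mathcal{O}(n^2)$ total time, while only the current column is kept transiently, preserving $\mathcal{O}(n\log n)$ space. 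Getting this path-based column recomputation and its amortized accounting exactly right is the technical crux. Finally, a witness permutation is recovered by the usual traceback, and the breakpoint variant is obtained exactly as in \cref{thm:nthirdapprox}, by adjusting each root value $B(\rroot(T),i,j)$ by one whenever $i\ne 1$ or $j\ne n$. \qed
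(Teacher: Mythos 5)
Your overall strategy is genuinely different from the paper's. You stay with the bottom-up endpoint DP of \cref{thm:nthirdapprox} and try to compress the per-entry minimization using the structure of $d(h,\ell)=1-\delta_{h+1,\ell}$, whereas the paper abandons that DP entirely: it builds the permutation left to right, computing $\opt(i,p)$, the minimum number of blocks of a length-$p$ prefix ending in leaf $i$, and uses the fact (due to Brandes) that in a \emph{complete} binary tree the set of leaves that may precede $i$ at position $p$ is exactly $\leaves(T(\preced(i,\rob(p))))$. Because all predecessors except $i-1$ contribute cost $1$, each $\opt(i,p)$ is a two-way minimum between a subtree aggregate $\opt(\preced(i,\rob(p)),p-1)+1$ and $\opt(i-1,p-1)$, so every entry costs $\mathcal{O}(1)$ and the $\mathcal{O}(n^2)$ bound is immediate; the $\mathcal{O}(n\log n)$ space then comes from a separate divide-and-conquer traceback, not from the forward pass alone. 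Notably, the paper's route uses completeness in an essential way (via $\rob$ and $\preced$), while your route would, if it worked, apply to arbitrary binary trees, which should itself make you suspicious given that the paper only claims $\mathcal{O}(n^3)$ in that generality.

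The gap is exactly where you flag it, and it is not a detail that can be deferred. Your reduction $B(v,i,j)=1+L(w,i)+R(x,j)-\mathrm{bit}(i,j)$ is fine, but evaluating the bit requires, for each split adjacency $h\in S_v$ and each first-leaf $i$, the predicate $B(w,i,h)=L(w,i)$, i.e.\ a full column of a table you have deliberately not stored. Your proposed fix, recomputing the column $B(w,\cdot,h)$ by walking from $h$ up to $w$, recurses: unfolding one level gives $B(w,i,h)=1+L(w_1,i)+R(w_2,h)-[\exists h'\in S_w:\,B(w_1,i,h')=L(w_1,i)\ \wedge\ B(w_2,h'+1,h)=R(w_2,h)]$, so reconstructing one column at $w$ demands up to $|S_w|$ further column-membership queries in $w_1$ and one more pinned-column query in $w_2$, and this branching is not covered by your charge of $\mathcal{O}(|\leaves(T(v))|)$ per adjacency; the claim that ``junction corrections are handled by the same bounded-improvement principle'' is asserted, not proved, and the $\mathcal{O}(n^2)$ accounting does not obviously survive the branching. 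Two smaller but real loose ends: the breakpoint variant needs the specific values $B(\rroot(T),1,j)$ and $B(\rroot(T),i,n)$, which your summaries $L,R$ do not retain; and ``the usual traceback'' is unavailable once the $B$-tables are discarded, so witness recovery needs its own argument (the paper spends real effort here to get $\mathcal{O}(n\log n)$ space). As it stands the proposal is a plausible research plan with its central lemma missing, not a proof.
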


We start by explaining the algorithm for blocks. The main idea is that, instead of applying dynamic programming bottom-up, we can apply dynamic programming building the permutation from left to right.

Brandes~\cite{brandesOptimalLeafOrdering2007} pointed out that fixing a leaf of $T$ to be at a specific position of a permutation $\pi\in\Pi(T)$ determines a partition into preceding and succeeding leaves.
For $p\in [n]$ let $b^p=(b^p_{k}\dots b^p_1)$ be the $k$-bit string corresponding to the number $p-1$.
For $i\in\leaves(T)$ we inductively define the $j$th parent $\parr(i,j)$ of $i$ as $\parr(i,0)=i$ and $\parr(i,j)=\parr(\parr(i,j-1))$.
Furthermore, for $i\in \leaves(T)$ and $1\le j\le k$ we define a precede-function $\preced(i,j)$ as
\[\preced(i,j)=\begin{cases}
        \lc(\parr(i,j)), & \text{if }i\in \leaves(T(\rc(\parr(i,j)))) \\
        \rc(\parr(i,j)), & \text{otherwise.}
    \end{cases}\]
The values $\parr(i,j)$ and $\preced(i,j)$ can be precomputed in $\mathcal{O}(n\log n)$ time and stored using $\mathcal{O}(n\log n)$ space.
The value $\rob(p)=\min\{i:1\le i\le k,b^p_i=1\}$ is the \emph{rightmost one bit} of the binary number corresponding to $p-1$ that is 1.

The key insight given by Brandes is stated in the following lemma.
\begin{lem}[\hspace{-0.01pt}\cite{brandesOptimalLeafOrdering2007}]
    If leaf $i\in [n]$ is fixed at position $p\in [n]\setminus \{1\}$ then exactly $\leaves(T(\preced(i,\rob(p))))$ can precede $i$.
\end{lem}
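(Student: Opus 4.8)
The plan is to decode the binary digits of $p-1$ as a record of the left/right choices made along the root-to-$i$ path, and then read off the immediate predecessor of $i$ from the lowest such choice. Here I read ``$X$ can precede $i$'' as ``$X$ can be the leaf placed at position $p-1$, immediately before $i$'', which is the quantity the left-to-right dynamic program needs.

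First I would set up the positional correspondence. Fix $\pi\in\Pi(T)$ with $i$ at position $p$, so exactly $p-1$ leaves lie before $i$. Walking up the path $i=\parr(i,0),\parr(i,1),\dots,\parr(i,k)=\rroot(T)$, at each ancestor $\parr(i,j)$ the sibling subtree $T(\preced(i,j))$ (which has $2^{j-1}$ leaves) is placed either entirely before or entirely after the subtree $T(\parr(i,j-1))$ that contains $i$; these are the only two options allowed by $\Pi(T)$, and the choices at the distinct internal nodes $\parr(i,1),\dots,\parr(i,k)$ are mutually independent. A leaf precedes $i$ exactly when it lies in a sibling subtree placed before $i$, so the number of leaves before $i$ equals $\sum_{j}[\,T(\preced(i,j))\text{ before }i\,]\cdot 2^{j-1}$. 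Matching this against $p-1=\sum_j b^p_j\,2^{j-1}$ and invoking uniqueness of binary representations gives the key identification: $b^p_j=1$ iff $T(\preced(i,j))$ is placed before $i$.

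Next I would localize the predecessor. Let $j^\ast=\rob(p)$, the least index with $b^p_{j^\ast}=1$. Since $b^p_j=0$ for all $j<j^\ast$, at every level below $j^\ast$ the half containing $i$ is placed first, so $i$ is the leftmost leaf of $T(\parr(i,j^\ast-1))$. At level $j^\ast$ the sibling $T(\preced(i,j^\ast))$ is placed immediately before $T(\parr(i,j^\ast-1))$, hence the leaf directly before $i$ is precisely the last leaf in the chosen ordering of $T(\preced(i,j^\ast))$; in particular every immediate predecessor of $i$ lies in $\leaves(T(\preced(i,j^\ast)))$. Conversely, since $T(\preced(i,j^\ast))$ is a complete binary tree whose internal ordering is unconstrained by the choices along the path of $i$, any of its leaves can be realized as its rightmost leaf (at each internal node inside that subtree, place the side containing the target leaf last). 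Combining both directions yields that exactly $\leaves(T(\preced(i,\rob(p))))$ can precede $i$.

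I expect the main obstacle to be the bookkeeping in the first step: pinning down the direction convention (bit $1$ means ``sibling placed before $i$'') and justifying that the before/after decisions at distinct ancestors are genuinely independent and jointly realize every value of $p-1$, so that the binary expansion of $p-1$ is exactly the vector of these choices. Once that correspondence is clean, the remaining steps---identifying $\rob(p)$ as the lowest level at which a sibling subtree sits before $i$, and using completeness to realize every candidate rightmost leaf of that sibling subtree---are routine.
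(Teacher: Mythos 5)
Your proof is correct. Note that the paper itself gives no proof of this statement---it is imported verbatim from Brandes's work on optimal leaf ordering (the bracketed citation in the lemma header), so there is nothing in the paper to compare against. Your self-contained derivation is the standard one and is sound: the identification of the bits of $p-1$ with the independent before/after choices at the ancestors $\parr(i,1),\dots,\parr(i,k)$ (via $p-1=\sum_j c_j 2^{j-1}$ and uniqueness of binary representation) is exactly the mechanism behind $\rob(p)$, and both directions---that the leaf at position $p-1$ must be the rightmost leaf of $T(\preced(i,\rob(p)))$, and that any leaf of that subtree can be realized as such because its internal ordering is unconstrained---are handled. You also implicitly use $p\ne 1$ to guarantee $\rob(p)$ exists; it would be worth stating that explicitly, but it is not a gap.
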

\looseness=-1
Essentially, if a leaf $i$ is fixed at a specific position $p$ of $\pi\in \Pi(T)$, we know exactly which leaves $j\in \leaves(T)$ can precede it in $\pi$. Further, these leaves $j$ exactly correspond to $\leaves(T(\preced(i,\rob(p))))$.
This leads to the algorithm for \cref{thm:nsquard}.
\begin{proof}[Proof of \cref{thm:nsquard}]
    \looseness=-1
    As in the algorithm of Brandes \cite{brandesOptimalLeafOrdering2007} we compute the values $\opt(i,p)$ that correspond to the minimal number of blocks of a prefix of length $p$ ending with leaf $i$.
    When computing $\opt(i,p)$ we simply have to look at all values $\opt(j,p-1)$ for leaves $j$ that can precede $i$ when $i$ is fixed at position $p$. But in our case, we know that $d(j,i)$ is 1 for all $j$ with the only exception of $j=i-1$.
    Thus, we can store $\min_{j}\opt(j,p-1)$ at the internal vertex that corresponds to the lowest common ancestor of all these leaves $j$. Additionally, we have to check for the case where $i-1$ can precede $i$.
    Formally, we compute the following.
    For internal nodes $v$ let $\opt(v,p)=\min(\opt(\lc(v),p), \opt(\rc(v), p))$.
    Clearly $\opt(i,1)=1$ for all $i\in [n]$.
    For $i\in \leaves(T)$ and $p>1$ we can compute $\opt(i,p)$ as
    \[\opt(i,p)=\begin{cases}\opt(\preced(i, \rob(p)),p-1)+1,\qquad\text{if }i-1\not\in \leaves(T(\preced(i,\rob(p)))) \\
            \min(\opt(\preced(i, \rob(p)),p-1)+1,\opt(i-1,p-1))\qquad\text{otherwise.}\end{cases}\]
    The case $i-1\in \leaves(T(\preced(i,\rob(p))))$ exactly corresponds to the possibility of $i-1$ preceding $i$.
    This condition can be checked by precomputing $\lca(i-1,i)$ for all $2\le i\le n$.
    Clearly, all values can be computed in $\mathcal{O}(n^2)$ time and
    \[\min_{\pi^\prime\in\Pi(T)}\blocks(\pi^\prime)=\min_{i\in [n]}\opt(i,n).\]
    With the values in the $\opt$-array, it is immediate how to also compute the permutation $\pi\in \Pi(T)$ with $\blocks(\pi)=\min_{i\in [n]}\opt(i,n)$ in $\mathcal{O}(n^2)$ space and time.
    But the space complexity can even be reduced to $\mathcal{O}(n\log n)$ as shown by Brandes \cite{brandesOptimalLeafOrdering2007}.
    We will only give the high-level idea here, the full description can be found in \cite{brandesOptimalLeafOrdering2007}.
    First, when computing the value $\min_{\pi^\prime\in\Pi(T)}\blocks(\pi^\prime)$ we do not need to store the values $\opt(i,p)$ and $\opt(v,p)$ for all $p\in [n]$. We can simply iterate $p$ from $1$ to $n$ and only keep $\opt$-values for $p$ and $p-1$. The main idea is then to only store the element in the middle of the optimal permutation and recursively determine the optimal permutation under this boundary condition in the first and the second half. This somewhat resembles a single-pivot quicksort approach. The time complexity is $T(n)=2T(n/2)+n^2$ which solves to $T(n)=\mathcal{O}(n^2)$. This approach still works for our adaptation as the recursive procedure splits the search space into two equal parts, both still forming a complete binary tree. Thus, the $\opt$-values can again be stored at internal nodes. The space complexity is dominated by $\mathcal{O}(n\log n)$ for storing $\parr(i,j)$ and $\preced(i,j)$.

    Lastly the algorithm can be adapted for breakpoints instead of blocks by setting $\opt(1,1)=0$ in the beginning and setting $\opt(n,n)=\opt(n,n)-1$ at the end of the algorithm. \qed
\end{proof}

\section{FPT-Algorithm}\label{sec:fpt}
In this section we will show an FPT-algorithm for \revision{\bcrprob}\ for arbitrary binary trees parameterized by the number of transpositions.
The proposed algorithm will be able to produce a witness for \revision{\bcrprob}\ in the same time if such a witness exists.

First, let us state two lemmata that are used for reduction rules.
The first lemma shows that by removing an element from a permutation we never increase the transposition distance.
\begin{restatable}{lem}{lemmapermdelete}\label{lemma:permdelete}
    Let $\pi\in\Pi_n$ and $i\in [n]$. Then $d_t(\pi\ominus\pi_i)\le d_t(\pi)$.
\end{restatable}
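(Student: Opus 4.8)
The plan is to take an optimal $\id$-transposition sequence for $\pi$ and \emph{project} it onto the shorter permutation $\pi\ominus\pi_i$, arguing that each transposition either induces a single transposition on $[n-1]$ or vanishes. Write $v := \pi_i$ for the value to be deleted, and let $\tau^1,\dots,\tau^\ell$ with $\ell = d_t(\pi)$ realise $\pi\circ\tau^1\circ\cdots\circ\tau^\ell = \id_n$. Put $\pi^0 := \pi$ and $\pi^t := \pi^{t-1}\circ\tau^t$, so $\pi^\ell = \id_n$. Since deleting the value $v$ and decreasing the larger values is a fixed order-preserving relabelling on values (it does not depend on positions), I can define $\sigma^t := \pi^t\ominus v$ for every $t$; here $\sigma^0 = \pi\ominus\pi_i$ and $\sigma^\ell = \id_n\ominus v = \id_{n-1}$. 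The statement reduces to showing that the chain $\sigma^0,\dots,\sigma^\ell$ is realised by at most $\ell$ transpositions.

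The key claim is that for each $t$, either $\sigma^t = \sigma^{t-1}$, or $\sigma^t = \sigma^{t-1}\circ\tau'$ for a single transposition $\tau'$ of $[n-1]$. The transposition $\tau^t = \tau(a,b,c)$ swaps the adjacent position-blocks $F$ (positions $a,\dots,b-1$) and $S$ (positions $b,\dots,c-1$) of $\pi^{t-1}$, and because $\tau^t$ only permutes positions, the value $v$ is merely carried to its new slot; consequently the relabelling applied to form $\sigma^{t-1}$ and $\sigma^t$ is identical. I would split into cases by the position of $v$. If $v$ lies outside both blocks, then $F$ and $S$ survive intact in $\sigma^{t-1}$, with their index range shifting down by one exactly when $v$ precedes them, so the same (reindexed) block swap carries $\sigma^{t-1}$ to $\sigma^t$. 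If $v$ lies in one block, say $F$, then in $\sigma^{t-1}$ that block becomes $F$ with $v$ deleted, which remains contiguous and adjacent to $S$; swapping these two blocks is a legitimate transposition taking $\sigma^{t-1}$ to $\sigma^t$, \emph{provided} $F$ had length at least two. The only remaining situation is a single-element block $F = (v)$, in which case removing $v$ makes $\sigma^{t-1}$ and $\sigma^t$ literally equal, since swapping an emptied block past $S$ has no effect; the symmetric analysis applies to $v\in S$.

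Granting the claim, deleting the trivial steps yields an $\id$-transposition sequence for $\sigma^0 = \pi\ominus\pi_i$ of length at most $\ell$, whence $d_t(\pi\ominus\pi_i)\le \ell = d_t(\pi)$. I expect the main obstacle to be bookkeeping rather than anything conceptual: one must confirm that the relabelling commutes with the entire process (which it does, since it acts on values while transpositions act on positions) and, above all, isolate the degenerate single-element-block case, which is precisely where a transposition is saved and the inequality may become strict. The index shift when $v$ precedes the swapped blocks also requires a routine check that the projected operation is still of the form $\tau(i',j',k')$ with $1\le i'<j'<k'\le n$, using that $v$ occupies a genuine position to keep the shifted endpoints in range.
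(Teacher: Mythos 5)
Your proposal is correct and follows essentially the same route as the paper: both project an optimal $\id$-transposition sequence for $\pi$ onto $\pi\ominus\pi_i$ step by step, showing each $\tau^t$ induces at most one transposition on the reduced permutation (the paper writes this as the explicit formula $\upsilon^\ell=\tau^\ell\ominus\inv(\pi\circ\tau^1\circ\dots\circ\tau^{\ell-1})(\pi_i)$ and verifies it by a case analysis in the appendix). Your explicit isolation of the degenerate single-element-block case, where the projected step is the identity and must be dropped, is in fact slightly more careful than the paper's blanket claim that every $\upsilon^\ell$ is a transposition.
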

\begin{proof}
    Let $\tau^1,\dots, \tau^k$ be an $\id$-transposition sequence of $\pi$. For $\ell\in [k]$, let $\upsilon^\ell=\tau^\ell\ominus\inv((\pi\circ \tau^1\circ\dots\circ \tau^{\ell-1}))(\pi_i)$. It is easy to see that all $\upsilon^\ell$ are transpositions, and an exhaustive case distinction in \cref{appendix:equationproof} shows that
    \begin{equation}(\pi\circ\tau^1\circ\dots\circ\tau^\ell)\ominus\pi_i=(\pi\ominus\pi_i)\circ\upsilon^1\circ\dots\circ\upsilon^\ell.\label{eq:transpositiontransformation}
    \end{equation}
    Hence, $(\pi\ominus\pi_i)\circ \upsilon^1\circ\dots \upsilon^k=\id_{n-1}$. Essentially, the $\id$-transposition sequence $\tau^1,\dots,\tau^k$ for $\pi$ is simulated on $\pi\ominus \pi_i$ with the transposition sequence $\upsilon^1,\dots,\upsilon^k$. Hence, $d_t(\pi\ominus\pi_i)\le d_t(\pi)$.

\end{proof}
This can be used to show that, w.r.t.\ transposition distance, it is in some sense always ``safe'' to shift an element $x\in\pi$ after $x-1$ or before $x+1$ in a permutation.
Let us give the formal definition and lemma.
\begin{restatable}{defin}{defadjbefaft}\label{def:adjbefaft}
    Let $\pi\in \Pi_n$, $i\in [n]$, and $\pi_i=x$.
    Define $\adjbef(\pi, x)$ as the permutation obtained from $\pi$ by shifting $x$ to the beginning if $x=1$, otherwise to the position after $x-1$. Formally,
    \begin{equation*}\adjbef(\pi, x)=\\
        \begin{cases}
            (x,\pi_1,\dots,\pi_{i-1},\pi_{i+1},\dots,\pi_n)\qquad\qquad\qquad\qquad\qquad \text{if }x=1                                  \\
            (\pi_1,\dots,\pi_{j-1},x-1,x,\pi_{j+1},\dots,\pi_{i-1},\pi_{i+1},\dots,\pi_n) \\
            \qquad\qquad\qquad\qquad\qquad\qquad\text{if }x\ne 1\text{ and }i>j=\inv(\pi)(x-1)  \\
            (\pi_1,\dots,\pi_{i-1},\pi_{i+1},\dots\pi_{j-1},x-1,x,\pi_{j+1},\dots,,\pi_n) \\
            \qquad\qquad\qquad\qquad\qquad\qquad\text{if }x\ne 1\text{ and }i<j=\inv(\pi)(x-1).
        \end{cases}
        \end{equation*}
        
    Similarly, define $\adjaft(\pi, x)$ as the permutation obtained from $\pi$ by shifting $x$ to the end if $x=n$, otherwise to the position before $x+1$. Formally,
    \begin{equation*}
        \adjaft(\pi, x)=\begin{cases}
            (\pi_1,\dots,\pi_{i-1},\pi_{i+1},\dots,\pi_n,x)\qquad\qquad\qquad\qquad\qquad\text{if }x=n                                  \\
            (\pi_1,\dots,\pi_{j-1},x,x+1,\pi_{j+1},\dots,\pi_{i-1},\pi_{i+1},\dots,\pi_n)\\
            \qquad\qquad\qquad\qquad\qquad\qquad\text{if }x\ne n\text{ and }i>j=\inv(\pi)(x+1)  \\
            (\pi_1,\dots,\pi_{i-1},\pi_{i+1},\dots\pi_{j-1},x,x+1,\pi_{j+1},\dots,,\pi_n)\\
            \qquad\qquad\qquad\qquad\qquad\qquad \text{if }x\ne n\text{ and }i<j=\inv(\pi)(x+1).
        \end{cases}
    \end{equation*}
\end{restatable}
\begin{restatable}{lem}{lemmapermshiftto}\label{lemma:permshiftto}
    Let $\pi\in \Pi_n$, $i\in [n]$, and $\pi_i=x$.
    Then $d_t(\adjbef(\pi, x))\le d_t(\pi)$ and $d_t(\adjaft(\pi, x))\le d_t(\pi)$.
\end{restatable}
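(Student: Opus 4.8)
The plan is to prove both inequalities through the chain
\[
d_t(\adjbef(\pi,x)) \;\le\; d_t(\pi\ominus x) \;\le\; d_t(\pi),
\]
where the right inequality is exactly \cref{lemma:permdelete} applied to $\pi$ (deleting $\pi_i = x$), and the left inequality is the new content. The statement for $\adjaft$ is entirely symmetric (replacing $x-1$ by $x+1$, and the front by the back), so I would prove the $\adjbef$ case in detail and then remark on the symmetry.

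For the left inequality, first observe that $\adjbef(\pi,x)$ moves only the element $x$, so the relative order of all remaining elements is identical in $\pi$ and in $\adjbef(\pi,x)$; moreover, in every branch of \cref{def:adjbefaft} the result has $x$ placed immediately after $x-1$. Consequently $\adjbef(\pi,x)\ominus x = \pi\ominus x =: \sigma$, and, reading this backwards, $\adjbef(\pi,x)$ is obtained from $\sigma$ by expanding the single element $x-1$ into the adjacent consecutive pair $(x-1,x)$ (while incrementing all elements $\ge x$ of $\sigma$ by one to restore the labels of $\pi$). I would verify this identity once, noting that it holds uniformly for the $i>j$ and $i<j$ branches of the definition.

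Next I would mirror the block-to-element transformation already used in the preliminaries (the one turning an $\id$-transposition sequence for $\gl(\pi)$ into one for $\pi$). Take an $\id$-transposition sequence $\upsilon^1,\dots,\upsilon^m$ for $\sigma$ of length $m = d_t(\sigma)$, and interpret every element of $\sigma$ as a block of $\adjbef(\pi,x)$: all blocks are singletons except the element $x-1$, whose block is the consecutive pair $(x-1,x)$. Each $\upsilon^\ell$ swaps two adjacent contiguous ranges of elements of $\sigma$; the corresponding swap of the two adjacent contiguous ranges of blocks is a transposition of $\adjbef(\pi,x)$, and the pair $(x-1,x)$ always moves as a unit and stays consecutive. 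Applying the $m$ resulting transpositions sorts $\adjbef(\pi,x)$ to $\id_n$, so $d_t(\adjbef(\pi,x)) \le m = d_t(\sigma)$, which is the left inequality.

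Finally I would dispatch the boundary cases excluded above. For $x=1$ the element $x-1$ does not exist: here $\adjbef(\pi,1) = (1)\star\rho$, where $\rho$ is $\pi\ominus 1$ with every label increased by one. Since the leading $1$ is already in place and need never be touched, any $\id$-transposition sequence for $\pi\ominus 1$ can be run verbatim on the suffix, giving $d_t(\adjbef(\pi,1)) \le d_t(\pi\ominus 1)$; combined with \cref{lemma:permdelete} this settles $x=1$. The case $x=n$ for $\adjaft$ is symmetric, with $(n)$ appended at the end. I expect the only real obstacle to be the bookkeeping: checking the expansion identity across all branches of \cref{def:adjbefaft} and confirming that the block-expansion of a transposition sequence is again a valid transposition sequence (i.e.\ that contiguity of the swapped ranges, and of the pair $(x-1,x)$, is preserved throughout). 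Both points are conceptually identical to the gluing argument already established in the preliminaries, so no genuinely new difficulty should arise.
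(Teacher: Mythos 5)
Your proof is correct and follows essentially the same route as the paper: both reduce the claim to \cref{lemma:permdelete} via the intermediate permutation $\pi\ominus x$. The only difference is that where you re-derive the inequality $d_t(\adjbef(\pi,x))\le d_t(\pi\ominus x)$ by hand with an explicit block-expansion of a transposition sequence, the paper obtains the (stronger) equality immediately by observing $\gl(\adjbef(\pi,x))=\gl(\pi\ominus x)$ and invoking \cref{lemma:gluetransdist}.
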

\begin{proof}
    We show that $d_t(\adjbef(\pi, x))\le d_t(\pi)$, $d_t(\adjaft(\pi, x))\le d_t(\pi)$ can be shown in an equivalent way.
        If $\adjbef(\pi, x)=\pi$, the statement is trivial.
        So, assume $\adjbef(\pi, x)\ne\pi$. In this case we have that $\gl(\adjbef(\pi, x))=\gl(\pi\ominus x)$.
        Further, $d_t(\adjbef(\pi, x))=d_t(\gl(\adjbef(\pi, x)))=d_t(\gl(\pi\ominus x))$ and $d_t(\pi)=d_t(\gl(\pi))$ by \cref{lemma:gluetransdist}.
        Hence, the statement follows from \cref{lemma:permdelete}. \qed
    \end{proof}
Intuitively, the algorithm in Theorem~\ref{thm:bcrfpt} at the end of this section first exhaustively applies a reduction rule based on \cref{lemma:permshiftto} and removes leaves for which we know how they will be ordered in the tree in an optimal solution. Then, it applies a branching scheme that branches into the two orders of the children of an inner node in a bottom-up fashion, such that we can bound the depth of the resulting search tree. During this branching scheme, another reduction rule similar to the first one is applied, and in the leaf nodes of the search tree an FPT-algorithm for \transsort\ is applied.

We start with the first reduction rule (see \cref{fig:redrule1} for an example) based on \cref{lemma:permshiftto} that allows us to delete a pair of children from the input tree $T$ which are siblings and whose difference is one.
\begin{figure}[t]
    \centering
    \begin{subfigure}[b]{0.4\textwidth}
        \centering
        \includegraphics[width=\textwidth]{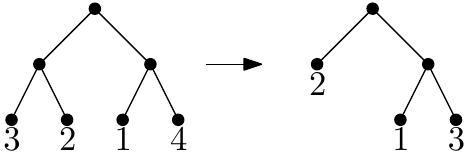}
        \caption{Example of \cref{redrule:1}.}
        \label{fig:redrule1}
    \end{subfigure}
    \hfill
    \begin{subfigure}[b]{0.59\textwidth}
        \centering
        \includegraphics[width=.8\linewidth]{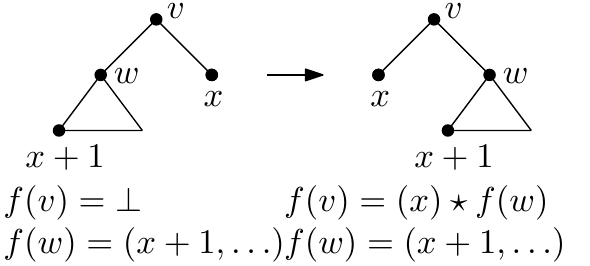}
        \caption{Illustration of \cref{redrule:2}.}
        \label{fig:redrule2}
    \end{subfigure}
    \caption{The reduction rules for the FPT-algorithm.}
    \label{fig:redrules}
\end{figure}
\begin{redrule}\label{redrule:1}
    If there are two leaves $x,y\in \leaves(T)$ which are siblings such that $x+1=y$,
    (1) delete the leaf nodes corresponding to $x$ and $y$ from $T$, (2) replace their parent node, which is now a leaf, with $x$, and (3) for each leaf $z\in\leaves(T)$ with $z>y$, set $z=z-1$.
\end{redrule}
If $\leaves(T)=[n]$, then after the application of \cref{redrule:1} we have that $\leaves(T')=[n-1]$ for the new tree $T'$. The safeness of \cref{redrule:1} follows directly from \cref{lemma:permshiftto} and \cref{lemma:gluetransdist}. Further, if we obtain a witness permutation $\pi'\in\Pi(T')$ with $d_t(\pi')\le k$, this permutation can easily be transformed into a permutation $\pi\in\Pi(T)$ with $d_t(\pi)\le k$ by increasing elements in the permutation that are larger than $x$ by one and replacing the element $x$ with the pair $(x,x+1)$. An $\id$-transposition sequence for $\pi'$ can then be mimicked in $\pi$.

\looseness=-1
After exhaustively applying \cref{redrule:1} we employ a search-tree algorithm such that in each search tree node some part of the ordering corresponding to sets $\leaves(T(v))$ for internal nodes $v$ is already fixed.
Formally, in each search tree node we are given a function $f$ whose domain is $V(T)$ such that either $f(v)=\bot$, or $f(v)\in \Pi(T(v))$. Further, if $f(v)\in \Pi(T(v))$ and $v$ is not a leaf node, then we have for the children $w,x$ of $v$ that $f(w)\ne\bot$ and $f(x)\ne\bot$, and $f(v)$ is \emph{consistent with} $f(w)$ and $f(x)$, that is, $f(v)=f(w)\star f(x)$ or $f(v)=f(x)\star f(w)$.
A permutation $\pi\in \Pi(T)$ is \emph{consistent with} $f$ if for all $v\in V(T)$ with $f(v)\ne \bot$, $\pi=\phi\star f(v)\star \psi$ for some $\phi,\psi$ which can also be the empty permutation.
We call the problems $(T,f)$ associated with the search tree nodes \revision{\bcrextension}.
The question is if there is a permutation $\pi\in \Pi(T)$ consistent with $f$ such that $d_t(\pi)\le k$.
In the root of our search tree and in the initial problem equivalent to \revision{\bcrprob}, $f(x)=(x)$ for all leaves $x\in\leaves(T)$ and $f(v)=\bot$ for all inner nodes.

A key observation is that $f$ already tells us something about the number of breakpoint in any permutation $\pi\in\Pi(T)$ consistent with $f$.
We thus let $\breakpoints(f)$ be the number of pairs $x,y\in\leaves(T)$ that will form a breakpoint in any permutation $\pi\in\Pi(T)$ consistent with $f$. Notice that $\breakpoints(f)$ can be determined in polynomial time as the breakpoint pairs $(x,y)$ s.t.\ $x+1\ne y$ correspond to pairs of adjacent elements in some permutations $f(v)\ne \bot$.

Let us now present the next reduction rule that is applied during the search tree algorithm (see \cref{fig:redrule2} for an illustration), and the branching rule that is employed once the reduction rule is not applicable.
\begin{redrule}\label{redrule:2}
    Let $v\in V(T)$ be an inner node with the two children $x$ and $w$, such that $f(v)=\bot$, $x$ is a leaf, $f(w)\ne\bot$, and $f(w)(1)=x+1$ or $f(w)(|T(w)|)=x-1$.
    If $f(w)(1)=x+1$, set $f(v)=(x)\star f(w)$.
    Otherwise, set $f(v)=f(w)\star (x)$.
\end{redrule}
The safeness of \cref{redrule:2} again follows from \cref{lemma:permshiftto}, as it essentially shifts $x$ before $x+1$ or after $x-1$ in any permutation consistent with $f$. Notice that if $f(w)(1)=x+1$ and $f(w)(|T(w)|)=x-1$ it does not matter how we order the children of $v$, as resulting permutations $\pi\in\Pi(T)$ consistent with $f$ will be equivalent w.r.t.\ the glue-operation.
After exhaustive application of \cref{redrule:2}, the main branching rule can be applied.
\begin{branchrule}\label{branchrule:1}
    Let $v\in V(T)$ be an inner node with the two children $u$ and $w$ such that $f(v)=\bot$, $f(u)\ne\bot$, and $f(w)\ne\bot$. Create two new branches $(T,f^{\alpha})$ and $(T,f^\beta)$ where
    \[f^\alpha(x)=\begin{cases}
            f(u)\star f(w) & \quad\text{if }x=v,    \\
            f(x)           & \quad\text{otherwise},
        \end{cases}\qquad
        f^\beta(x)=\begin{cases}
            f(w)\star f(u) & \quad\text{if }x=v,    \\
            f(x)           & \quad\text{otherwise}.
        \end{cases}\]
\end{branchrule}
\cref{branchrule:1} essentially tries ordering the two children of an inner node in the two possible ways, if the corresponding orderings of subtrees rooted at the children are already determined by $f$.
If $u$ and $w$ are both leaves, then
\begin{equation}\breakpoints(f^\alpha)=\breakpoints(f^\beta)=\breakpoints(f)+1,\label{eq:bpincreases}\end{equation}
as the pair $(u,w)$ or $(w,u)$ will contribute one breakpoint. This is the case as \cref{redrule:1} was already applied exhaustively.
Also, if only one of $u$ and $w$ is a leaf, and \cref{redrule:2} was already applied exhaustively, then \cref{eq:bpincreases} also holds:
If, e.g., $u$ is a leaf, then $(u,f(w)(1))$ or $(f(w)(|\leaves(T(w))|), u)$ will be a breakpoint, depending on the chosen order of $u$ and $w$.
This insight will allow us to bound the depth of the search tree.

With \cref{branchrule:1} we are now ready to give the theorem that captures the algorithm.
\begin{restatable}{theorem}{thmbcrfpt}\label{thm:bcrfpt}
    \revision{\bcrprob}\ is solvable in time $\mathcal{O}(2^{6k}\cdot (3k)^{3k}\cdot n^{\mathcal{O}(1)})$ for rooted binary trees, i.e., \revision{\bcrprob}\ is FPT for rooted binary trees when parameterized by the number of transpositions (= block crossings) $k$.
\end{restatable}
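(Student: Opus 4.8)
The plan is to assemble \cref{redrule:1,redrule:2} and \cref{branchrule:1} into a search tree of depth $\mathcal{O}(k)$ and to run the fixed-parameter \transsort\ routine at each of its leaves. First I would apply \cref{redrule:1} exhaustively to obtain a tree $T'$ with $\leaves(T')=[n']$; this is safe, and a witness for $T'$ lifts back to $T$ by expanding each merged leaf into its block and mimicking the transposition sequence. I initialize $f$ by $f(x)=(x)$ on leaves and $f(v)=\bot$ on inner nodes, and then alternate two steps: exhaustively apply \cref{redrule:2}, and while $f(\rroot(T'))=\bot$ apply \cref{branchrule:1} at a lowest inner node $v$ with $f(v)=\bot$ whose two children are already determined. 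A branch is pruned as soon as $\breakpoints(f)>3k$, which is safe by \cref{lemma:lowerbound}. Every leaf of the search tree fixes a permutation $\pi\in\Pi(T')$ with $\breakpoints(\pi)\le 3k$, so I test $d_t(\pi)\le k$ by running the \transsort\ search-tree algorithm on $\gl(\pi)$, returning the induced witness in the positive case.

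The hard part will be bounding the number of branchings on a root-to-leaf path. The enabling structural fact is that after \cref{redrule:1} no inner node $v$ of $T'$ admits an order in $\Pi(T'(v))$ that lists $\leaves(T'(v))$ as a single consecutive run: a short induction shows that such a listing would force the two child blocks to be consecutive siblings, hence reducible. Consequently every \emph{cherry} (inner node with two leaf children) has non-consecutive leaves and contributes a breakpoint to every $\pi\in\Pi(T')$, so the number $n_{LL}$ of cherries is at most $\breakpoints(\pi)$. Either $n_{LL}>3k$, in which case every $\pi$ has more than $3k$ breakpoints and I report NO, or $n_{LL}\le 3k$; an elementary double-counting of leaf- and inner-children (using that a binary tree has one more leaf than inner node) then gives $n_{II}=n_{LL}-1<3k$, where $n_{II}$ is the number of inner nodes with two inner children.

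With this in hand I would split the branchings into two kinds. By \cref{eq:bpincreases}, a branching at a node with at least one leaf child raises $\breakpoints(f)$ by one in both branches (exhaustion of \cref{redrule:1,redrule:2} forbids a consecutive junction there), so these \emph{productive} branchings number at most $\breakpoints(f)\le 3k$ along any path not yet pruned. The only branchings that can leave $\breakpoints(f)$ unchanged sit at nodes with two inner children, of which there are fewer than $n_{II}<3k$ in total. Hence every path has fewer than $6k$ branchings and the search tree has at most $2^{6k}$ leaves. Multiplying by the $\mathcal{O}(n(3k)^{3k})$ cost of the per-leaf \transsort\ test, together with the polynomial bookkeeping for the reduction rules and for $\breakpoints(f)$ at each of the $\mathcal{O}(2^{6k})$ nodes, yields the stated bound $\mathcal{O}(2^{6k}\cdot(3k)^{3k}\cdot n^{\mathcal{O}(1)})$.

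I expect the main obstacle to be exactly those unproductive branchings at two-inner-children nodes: they do not move the breakpoint potential and are therefore invisible to the $\breakpoints(f)\le 3k$ pruning that bounds the productive ones. The trick that rescues the depth bound is to charge them not against the potential but against the fixed quantity $n_{II}$, which the cherry counting ties back to the breakpoint bound through $n_{II}=n_{LL}-1\le 3k$; getting the structural statement that no inner subtree of $T'$ forms a single run right is what makes this charging valid.
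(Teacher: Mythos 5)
Your algorithm is exactly the paper's (exhaustive \cref{redrule:1}, then the search tree alternating \cref{redrule:2} and \cref{branchrule:1} with pruning at $\breakpoints(f)>3k$ and a \transsort\ search at the leaves), and your correctness argument matches; the genuine difference is in how you bound the search-tree depth. The paper charges the branchings at nodes with two inner children locally: it builds, for each root-to-leaf path, a forest on the branched two-inner-children nodes together with the branched cherries, argues that this forest is binary with leaf set exactly the branched cherries (using that the bottom-most fixed node in any subtree must be a branched cherry), and concludes that the two-inner-children branchings number at most the cherry branchings, giving depth at most $2\cdot 3k$. You instead charge them globally: every cherry of the reduced tree has non-consecutive leaf children by \cref{redrule:1} and hence forces a breakpoint in \emph{every} consistent permutation, so either the number $n_{LL}$ of cherries exceeds $3k$ and the instance is a \emph{NO}-instance by \cref{lemma:lowerbound}, or $n_{LL}\le 3k$ and the handshake count $n_{II}=n_{LL}-1$ bounds the total number of two-inner-children nodes in the whole tree, hence the number of ``unproductive'' branchings on any path. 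Both yield depth below $6k$ and the stated running time. Your route is arguably cleaner --- it replaces the forest construction by a static count of the input tree and yields as a by-product the preprocessing observation that more than $3k$ cherries already certifies a \emph{NO}-instance --- while the paper's per-path argument would still apply even without the global relation $n_{II}=n_{LL}-1$ (e.g., for trees that are not full binary). The general structural claim you invoke (no inner subtree of $T'$ orders as a single consecutive run) is more than you need: only the cherry case is used, and that is immediate from \cref{redrule:1}.
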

\begin{algorithm}[!t]
    \SetKwFunction{FMain}{Recursive-\revision{\bcrextension}}
    \SetKwProg{Fn}{}{:}{}
    \Fn{\FMain{$T$, $f$}}{
        Exhaustively apply \cref{redrule:2} to $(T,f)$\;
        \If{$\breakpoints(f)>3k$}{\label{line:breakpoints}
            \Return \False
        }
        \If{$f(\rroot(T))\ne \bot$}{
            $\pi\gets f(\rroot(T))$\;
            \uIf(\tcp*[h]{$\mathcal{O}(n(3k)^{3k})$ FPT algorithm}){$d_t(\pi)\le k$}{
                    \Return $\pi$ and $\id$-transposition sequence $(\tau^1,\dots,\tau^\ell)$ with $\ell\le k$
                }
            \uElse{
                \Return \False
            }
        }
        \For(\tcp*[h]{$(T',f')$ runs over the two branches created by applying \cref{branchrule:1} to $T$}){$(T^\prime,f^\prime)\in\textup{\cref{branchrule:1}}(T)$}{\label{line:branchrule}
            $\textup{ans}\gets \FMain(T^\prime,f^\prime)$\;
            \If{$\textup{ans}\ne\False$}{
                \Return ans\;
            }
        }
        \Return \False\;
    }
    \caption{Recursive FPT-algorithm for \revision{\bcrextension}.}
    \label{alg:fptbcrprob}
\end{algorithm}
\begin{proof}
    We give a search-tree algorithm for \revision{\bcrprob}\ that also provides a witness in case of success.
    This search tree algorithm assumes that \cref{redrule:1} was already applied exhaustively and is given as a recursive function in \cref{alg:fptbcrprob}.
    \cref{redrule:1} can be implemented in linear time, and as already mentioned, a witness for the instance after application of the reduction rule can be transformed into a witness before application of the reduction rule.
    Further, the instance $T$ of \revision{\bcrprob}\ is transformed into an instance $(T,f)$ of \revision{\bcrextension}\ by setting $f(x)=(x)$ for all leaves $x\in\leaves(T)$ and $f(v)=\bot$ for all internal nodes.
    The algorithm is then invoked with \FMain$(T,f)$, and returns \False\ in case of failure.
    In case of success it returns $\pi\in\Pi(T)$ and an $\id$-transposition sequence $\tau^1,\dots,\tau^\ell$ with $\ell\le k$ for $\pi$.

    \proofparagraph{Correctness.}
    In each search tree node (function call of \cref{alg:fptbcrprob}), \cref{redrule:2} is applied exhaustively first. The safeness of this rule was already discussed and given because of \cref{lemma:permshiftto}.
    Then, if $\breakpoints(f)>3k$ we can safely disregard the current branch of the search tree because in resulting recursion calls the number of breakpoints can only increase, and any $\id$-transposition sequence will be longer than $k$ (\cref{lemma:lowerbound}).

    If then $f(\rroot(T))\ne \bot$ we know that the order amongst all children of $T$ is fixed. We then apply the algorithm for \transsort\ outlined in \cref{par:transpositions} that tries to find a transposition sequence $\tau^1,\dots,\tau^\ell$ turning $f(\rroot(T))$ into the identity permutation with $\ell\le k$ transpositions. If the algorithm succeeds then we can return $f(\rroot(T))$ and the transposition sequence, otherwise we report failure for this leaf of the search tree.

    In the remaining case we apply \cref{branchrule:1} to create new recursion calls for the two possible orders of children for some internal node. Note that \cref{branchrule:1} is always applicable in \cref{line:branchrule} as \cref{redrule:2} was already applied exhaustively and the order amongst children of at least one internal node is yet to be determined, e.g., $\rroot(T)$.

    \proofparagraph{Runtime.} Note that the search tree is a binary tree because \cref{branchrule:1} always creates two child nodes. Further, we give a bound for the depth of the search tree as follows. Consider any leaf instance $(T,f)$ of the search tree. Because of \cref{line:breakpoints} we have $\breakpoints(f)\le 3k$.
    Let $V_b(T,f)$ be the set of internal nodes of $T$ for which \cref{branchrule:1} fixed the order amongst children of these internal nodes following the path in the search tree to the leaf instance.
    Hence, $|V_b(T,f)|$ is exactly the depth of the leaf instance $(T,f)$ in the search tree.
    Now let $V_{b}^{(2)}(T,f)\subseteq V_b(T,f)$ be the internal nodes whose children are both leaves, and let $V_{b}^{(1)}(T,f)\subseteq V_b(T,f)$ be the internal nodes which have exactly one child that is a leaf.
    In other words, for all $v\in V_{b}^{(2)}(T,f)$ we have $\child(v)\subseteq\leaves(T)$ and $f(v)\ne\bot$. For all $v\in V_{b}^{(1)}(T,f)$ we have $|\child(v)\cap\leaves(T)|=1$ and $f(v)\ne\bot$.
    Let $V_b^i(T,f)=V_b(T,f)\setminus(V_{b}^{(1)}(T,f)\cup V_{b}^{(2)}(T,f))$.
    Because \cref{redrule:1} was already applied exhaustively, each node in $V_{b}^{(2)}(T,f)$ increases $\breakpoints(f)$ by one.
    Also, as we only apply \cref{branchrule:1} once \cref{redrule:2} is not applicable anymore, each node in $V_{b}^{(1)}(T,f)$ increases $\breakpoints(f)$ by one.
    Hence, we have $|V_{b}^{(2)}(T,f)|+|V_{b}^{(1)}(T,f)|\le 3k$.
    Now let us build a forest $F_b$ on the vertex set $V(F_b):=V_b^i(T,f)\cup V_b^{(2)}(T,f)$ as follows.
    For each vertex $v\in V(F_b)$ let $\ances_{F_b}(v)=\mathsf{anc}_T(v)\cap V(F_b)$.
    If $\ances_{F_b}(v)\ne\emptyset$, let $p_v=\argmax_{w\in \ances_{F_b}(v)}\depth_T(w)$ and add the edge $\{v,p_v\}$ to $F_b$.
    \begin{figure}[t]
        \centering
        \includegraphics{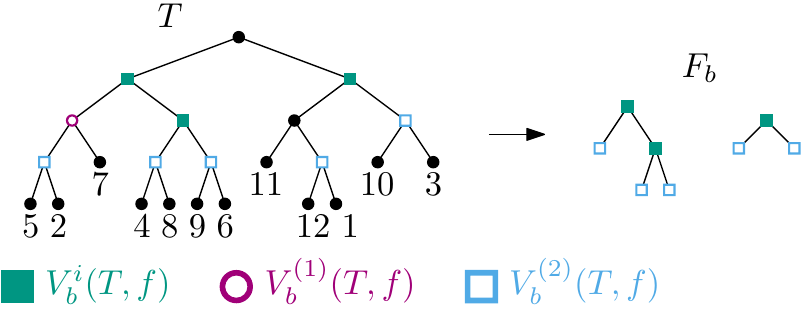}
        \caption{Example for the construction of the forest $F_b$, where \cref{redrule:2} was applied for the parent of $11$, and $f(\rroot(T))=\bot$.}
        \label{fig:fpttimecomplexity}
    \end{figure}
    An example for the construction of $F_b$ is given in \cref{fig:fpttimecomplexity}.
    Now, notice that $F_b$ consists of a set of disjoint rooted binary trees, whose combined leaf set is exactly $V_{b}^{(2)}(T,f)$.
    Hence, $|V(F_b)|\le 2|V_{b}^{(2)}(T,f)|$, and $|V_b(T,f)|\le 2|V_{b}^{(2)}(T,f)|+|V_{b}^{(1)}(T,f)|$. As $|V_{b}^{(2)}(T,f)|+|V_{b}^{(1)}(T,f)|\le 3k$ we can conclude that $|V_b(T,f)|\le 6k$, which bounds the search tree depth.
    It follows that the search tree has at most $\mathcal{O}(2^{6k})$ leaf nodes.
    For each of these leaf nodes, an algorithm for \transsort\ is applied that takes at most $\mathcal{O}(n(3k)^{3k})$ time (see \cref{par:transpositions}).
    Finally, we can conclude that the algorithm can be implemented in time $\mathcal{O}(2^{6k}(3k)^{3k}n^c)$ time for some small constant $c$. \qed
\end{proof}

While we are giving an FPT algorithm here, we think that the asymptotic running time can be further optimized.
Nonetheless, the bottleneck of the algorithm is still an FPT subprocedure for \transsort. So any better algorithm for \transsort\ will also improve Theorem~\ref{thm:bcrfpt}.

We also believe that a similar algorithm can be constructed for non-binary trees. In that case, the algorithm should have running time $\mathcal{O}(f(k,\Delta)\cdot n^c)$, where $\Delta$ is the maximum degree of the input tree $T$, $c$ is a constant, and $f$ is some function that only depends on $k$ and $\Delta$.

\section{Beyond Binary Trees}\label{sec:beyond}
\looseness=-1
In \cref{sec:approx} we have given two approximation algorithms for \revision{\bcrprob}\ on binary trees. The key step was to find in polynomial time a permutation consistent with the input tree that minimizes the number of breakpoints. In the following we show that this is not possible for non-binary trees. We start by showing that finding a permutation with  less than or equal $k$ blocks is \NP-complete.
\begin{restatable}{theorem}{thmblockprobnp}\label{thm:blockprobnp}
    For a rooted tree $T$ with leaf set $[n]$ \revision{and an integer $k$} it is \NP-complete to decide whether there exists $\pi\in\Pi(T)$ s.t.\ $\blocks(\pi)\le k$.
\end{restatable}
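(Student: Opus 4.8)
\emph{NP-membership and overall approach.}
NP-membership is immediate: a permutation $\pi\in\Pi(T)$ is a polynomial-size witness, and checking that $\pi$ is consistent with $T$ and computing $\blocks(\pi)$ both take polynomial time. For hardness the plan is to reduce from \textsc{Directed Hamiltonian Path}. The guiding observation is that minimizing $\blocks(\pi)$ is the same as \emph{maximizing} the number of indices $i$ with $\pi_i+1=\pi_{i+1}$, since $\blocks(\pi)=n$ minus that count; call such an index a \emph{good adjacency}. Because every $\pi\in\Pi(T)$ must lay out each subtree $T(v)$ as a contiguous block, the good adjacencies split into those realized \emph{inside} a subtree block and those realized \emph{across} the junction between two sibling blocks. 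The entire reduction is engineered so that the internal good adjacencies contribute a fixed constant regardless of the chosen order, while the junction good adjacencies correspond exactly to the arcs of the input digraph that the chosen order ``uses''.

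\emph{The construction.}
Given a digraph $D=(V,A)$ with $V=\{1,\dots,t\}$, I would build a tree $T$ whose root is a single high-degree (permuting) node with one child subtree $G_u$ per vertex $u$, so that the $t$ blocks may be ordered arbitrarily. Each arc $a=(u,v)\in A$ receives a dedicated pair of consecutive integers $(x_a,x_a+1)$: the value $x_a$ becomes an \emph{out-port} leaf placed in $G_u$, and $x_a+1$ becomes an \emph{in-port} leaf placed in $G_v$. Each gadget $G_u$ is itself a permuting node over a small \emph{core} block together with its in- and out-port leaves, arranged so that (i) the core always contributes the same number of internal good adjacencies, and (ii) the order chosen inside $G_u$ can expose any single out-port as the last element of its block and any single in-port as the first, without affecting the internal count. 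Placing $G_u$ immediately before $G_v$ then yields a good adjacency precisely when we expose $x_a$ and $x_a+1$ at the shared boundary, i.e.\ precisely when arc $a=(u,v)$ is ``traversed''.

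\emph{Correspondence and threshold.}
There are $t$ sibling blocks and hence $t-1$ junctions, so the total number of good adjacencies of any consistent order equals $C+(\text{good junctions})\le C+(t-1)$, where $C$ is the fixed sum of internal good adjacencies. Consequently $\blocks(\pi)\ge n-C-(t-1)=:k$ for every $\pi\in\Pi(T)$, with equality iff all $t-1$ junctions are good. An order of the gadgets realizing $t-1$ good junctions corresponds exactly to a directed Hamiltonian path $u_1\to\cdots\to u_t$ in $D$ (each consecutive pair is joined by a traversed arc), and conversely any such path prescribes a gadget order reaching the bound. Hence $(T,k)$ is a \emph{YES}-instance iff $D$ has a directed Hamiltonian path, which establishes \NP-hardness.

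\emph{Main obstacle.}
Since $\Pi(T)$ consists of permutations of \emph{all} of $[n]$, every consecutive pair $(i,i+1)$ is a potential good adjacency, so the delicate part is to choose the integer labelling so that the \emph{only} consecutive pairs that can be split across distinct gadgets are the dedicated arc pairs $(x_a,x_a+1)$, while every remaining consecutive pair lies inside a single core and is forced to be realized, thereby fixing the constant $C$. In particular I must rule out \emph{parasitic} good adjacencies --- a core endpoint meeting a port of another gadget, or two ports of non-adjacent gadgets happening to be consecutive --- because even one unintended good junction between non-adjacent vertices would let a non-Hamiltonian order reach the threshold and break the equivalence. Carrying this out cleanly, via buffer elements and a careful global assignment of $[n]$ to cores and ports (so ports are internally isolated singletons whose integer neighbors live in other gadgets), is the technically demanding step; once it is in place, the equivalence between $t-1$ good junctions and a Hamiltonian path follows directly.
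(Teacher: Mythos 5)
Your high-level plan coincides with the paper's: both reduce from directed Hamiltonian path, assign to each arc $(u,v)$ a pair of consecutive integers split between the gadgets of $u$ and $v$, and argue that the maximum number of ``good adjacencies'' $\pi_i+1=\pi_{i+1}$ is reached exactly when the gadget order traces a Hamiltonian path. However, what you explicitly defer as ``the technically demanding step'' --- a concrete assignment of $[n]$ to cores, ports, and buffers that (i) fixes the internal adjacency count $C$, (ii) lets each gadget expose an arbitrary in-port first and out-port last, and (iii) excludes every parasitic adjacency --- is precisely the content of the proof; without it the reduction is not established. In particular, you have not said where the buffer elements live. If a buffer integer sits inside some gadget's core, its own integer neighbours become new potential cross-gadget adjacencies, so the difficulty recurses; this is not a detail that ``follows directly.''

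The paper resolves this with a specific two-level structure you do not have: the root has exactly two children $c_1$ and $c_2$, where $c_2$ permutes the vertex gadgets and $c_1$ is a separate sibling subtree that absorbs all separator leaves. Each arc $e_j$ occupies a triple $(\ell,\ell+1,\ell+2)$ with $\ell=2|V|+3(j-1)+1$ placed as a leaf under $c_1$ and $\ell+1,\ell+2$ as the out-/in-ports; each vertex gadget is simply a star of leaves (no core at all, so the paper's analogue of your $C$ is zero), and the leading labels $1,\dots,2|V|$ are split as odd leaves under $c_1$ and one even leaf $2i$ per gadget. Because every child of $c_1$ differs from every other by at least $2$, and the missing middle element $\ell$ of each triple prevents two ports of distinct arcs from being consecutive, the only achievable good adjacencies are one at the single $c_1$/$c_2$ boundary and one per traversed arc at gadget junctions --- which is exactly statement (iii) that your proposal needs but does not prove. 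Until you supply an explicit labelling with these properties (or adopt the paper's), the claimed equivalence with the threshold $k=n-C-(t-1)$ is unsupported.
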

\begin{figure}[!t]
    \centering
    \includegraphics[width=.9\linewidth]{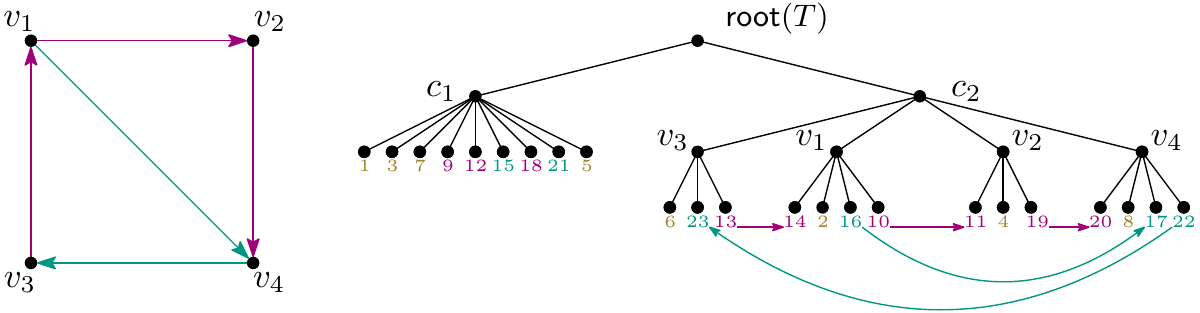}
    \caption{Example of reduction in Theorem~\ref{thm:blockprobnp}.}
    \label{fig:blockprobnp}
\end{figure}
Clearly, the problem is in NP.
For hardness, we give a reduction from the Hamiltonian Path problem on directed graphs with at most one arc between a pair of vertices which is \NP-complete \cite{DBLP:books/fm/GareyJ79}.
Let $G=(V,E)$ be an instance of Hamiltonian Path with directed arcs $E$. The problem is to find a path $P$ in $G$ that visits every vertex exactly once, called \emph{Hamiltonian path}.
Let $V=\{v_1,v_2,\dots,v_{|V|}\}$ and $E=\{e_1,e_2,\dots,e_{|E|}\}$.
W.l.o.g.\ we assume that every vertex in $G$ is incident to at least one edge.
We construct a rooted tree $T$ such that $\leaves(T)=[3|E|+2|V|]$.
An illustration is given in \cref{fig:blockprobnp}.
The root of $T$ has two children $c_1$ and $c_2$.
Vertex $c_1$ contains $|V|+|E|$ children
\[\revision{\{1,3,\dots, 2|V|-1\}\cup \{2|V|+1,2|V|+4,2|V|+7,\dots,2|V|+3|E|-2\}.}\]
Vertex $c_2$ contains $|V|$ children corresponding to the vertex set $V$. Let $\delta_G(v)$ be the \emph{degree} of a vertex $v$ in $G$.
In $T$, vertex $v_i$ has $1+\delta_G(v_i)$ children which are the following leaves: one child is $2i$; for each edge $e_j$ incident to $v_i$ one child is $2|V|+3(j-1)+2$ if $v_j$ is the source of $e_j$, and $2|V|+3(j-1)+3$ otherwise.
Notice that $T$ contains $2|V|+3|E|$ leaves and that $\leaves(T)=[2|V|+3|E|]$, as intended.
The intuition is that each edge $e\in E(G)$ corresponds to a triple $(\ell,\ell+1,\ell+2)$ of leaves in $T$ such that $\ell$ is the child of $c_1$, and $\ell+1$, $\ell+2$ are children of vertices corresponding to the source and target of $e$. The leaf $\ell$ rules out the possibility of a block of size greater than one amongst the children of a vertex $v_i$ in $T$.
The leaves $1,2,\dots,2|V|$ make it possible that there is a block formed by the rightmost leaf in of $T(c_1)$ and the leftmost leaf in $T(c_2)$.
This allows us to show that $G$ contains a Hamiltonian path if and only if there exists $\pi\in\Pi(T)$ with $\blocks(\pi)\le 2|V|+3|E|-1-(|V|-1)$.
\begin{lem}\label{claim:nphardblockforward}
    If $G$ contains a Hamiltonian path then there exists $\pi\in\Pi(T)$ with $\blocks(\pi)\le 2|V|+3|E|-1-(|V|-1)$.
\end{lem}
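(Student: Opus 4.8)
The plan is to read a Hamiltonian path directly off as a leaf ordering. First I would restate the target in terms of \emph{adjacencies}: writing $n=2|V|+3|E|$ for the number of leaves and noting that $\blocks(\pi)=n-a(\pi)$, where $a(\pi)=|\{i\in[n-1]:\pi_{i+1}=\pi_i+1\}|$, the desired inequality $\blocks(\pi)\le 2|V|+3|E|-1-(|V|-1)=|V|+3|E|$ is equivalent to exhibiting a $\pi\in\Pi(T)$ with $a(\pi)\ge|V|$. So it suffices to realize $|V|$ consecutive-value neighborings. Next I would pin down where these can come from. Among the children of $c_1$ the values lie in $\{1,3,\dots,2|V|-1\}\cup\{2|V|+3(j-1)+1\}_j$, which are pairwise non-consecutive; and inside any vertex subtree $T(v_i)$ the child values (the vertex leaf $2i$ together with edge leaves, spaced by multiples of $3$) are likewise pairwise non-consecutive. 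Hence no adjacency can be created strictly inside $T(c_1)$ or inside a single $T(v_i)$. The only opportunities are (a) a single adjacency straddling the $c_1$/$c_2$ split of the root, obtained by placing an odd leaf $2i-1$ last in $T(c_1)$ and its vertex leaf $2i$ first in $T(c_2)$; and (b) adjacencies straddling two consecutive vertex subtrees under $c_2$: for the triple of an edge $e_j=(v_s,v_t)$, the source leaf $2|V|+3(j-1)+2\in\child(v_s)$ and the target leaf $2|V|+3(j-1)+3\in\child(v_t)$ differ by one, so they become neighbors exactly when $T(v_s)$ immediately precedes $T(v_t)$ with the source leaf rightmost in $T(v_s)$ and the target leaf leftmost in $T(v_t)$.

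Given a Hamiltonian path $v_{\sigma(1)}\to\dots\to v_{\sigma(|V|)}$, I would build $\pi$ as follows: order the root's children as $c_1$ then $c_2$; order the vertex subtrees inside $c_2$ as $T(v_{\sigma(1)}),\dots,T(v_{\sigma(|V|)})$; inside $T(c_1)$ place the odd leaf $2\sigma(1)-1$ last; and inside each $T(v_{\sigma(t)})$ choose as leftmost child the vertex leaf $2\sigma(1)$ when $t=1$ and otherwise the target leaf of the path edge entering $v_{\sigma(t)}$, and as rightmost child the source leaf of the path edge leaving $v_{\sigma(t)}$ (omitted when $t=|V|$), with the remaining children in arbitrary order. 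This $\pi$ lies in $\Pi(T)$ since it only reorders children at each node. By construction the $c_1$/$c_2$ split yields the adjacency $(2\sigma(1)-1,\,2\sigma(1))$, and each of the $|V|-1$ path edges yields the adjacency between its source and target leaves, so $a(\pi)\ge|V|$ and thus $\blocks(\pi)\le n-|V|=|V|+3|E|$, which is exactly the claimed bound.

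The one point requiring care — and the only real obstacle — is verifying that the leftmost/rightmost demands placed on the individual vertex subtrees are mutually consistent and involve genuinely distinct children, so that all $|V|$ designated boundary pairs become adjacent in $\pi$ \emph{simultaneously}. For an interior path vertex $v_{\sigma(t)}$ the entering target leaf and the leaving source leaf belong to two different edges, hence are distinct children, and the vertex leaf remains available for the interior; for $v_{\sigma(1)}$ the leftmost demand (the vertex leaf) and the rightmost demand (the leaving source leaf) are again distinct; and $v_{\sigma(|V|)}$ carries only a leftmost demand. Thus every subtree can meet its constraints at once. Finally, since $G$ is a simple directed graph, the source and target leaves of a single edge never lie in the same subtree, so none of the chosen boundary pairs interfere with one another. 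This confirms $a(\pi)\ge|V|$ and completes the forward direction.
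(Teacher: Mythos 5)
Your construction of $\pi$ (order $c_1$ before $c_2$, place $2\sigma(1)-1$ rightmost in $T(c_1)$ and $2\sigma(1)$ leftmost in $T(v_{\sigma(1)})$, order the vertex subtrees along the Hamiltonian path, and pin the entering target leaf leftmost and the leaving source leaf rightmost in each $T(v_{\sigma(t)})$) is exactly the paper's, and counting adjacencies $a(\pi)\ge|V|$ is just the paper's count of $1+(|V|-1)$ size-two blocks phrased via $\blocks(\pi)=n-a(\pi)$. The proof is correct and follows essentially the same route; your explicit check that the leftmost/rightmost demands at each vertex subtree are on distinct children is a small addition the paper leaves implicit.
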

\begin{proof}
    Let $(v_{i_1},v_{i_2},\dots,v_{i_{|V|}})$ be a Hamiltonian path in $G$. We describe the permutation $\pi\in\Pi(T)$ by giving the order amongst children of every internal node of $T$
    \begin{itemize}
        \item The children of $\rroot(T)$ are ordered such that $c_1$ comes before $c_2$.
        \item The children of $c_1$ are ordered such that $2i_1-1$ is rightmost, the rest is ordered arbitrarily.
        \item The children of $c_2$ are ordered according to the Hamiltonian path. That is, $v_{i_1}$ is before $v_{i_2}$, $v_{i_2}$ is before $v_{i_3}$, and so on.
        \item For a vertex $v_{i_k}$ its children are ordered as follows.
              If $k=1$ then the leftmost child of $v_{i_k}$ is $2i_k$. Otherwise, let $e_j$ be the edge corresponding to the directed arc $(v_{i_{k-1}}, v_{i_k})$. Then, the leftmost child of $v_{i_k}$ is $2|V|+3(j-1)+3$ and the rightmost child of $v_{i_{k-1}}$ is $2|V|+3(j-1)+2$.
    \end{itemize}
    By construction, we have that $\pi\in\Pi(T)$. It remains to show that $\blocks(\pi)=2|V|+3|E|-1-(|V|-1)$. Notice that $\pi$ can have at most $2|V|+3|E|$ blocks and that each block of size two reduces this amount by one. Hence, we show that there are $1+(|V|-1)$ blocks of size two.
    The first of these blocks is obtained by the rightmost leaf in $T(c_1)$ and the leftmost leaf in $T(c_2)$. The remaining $|V|-1$ of these blocks are obtained by the rightmost child of $v_{i_k}$ and the leftmost child of $v_{i_{k+1}}$. These appear consecutively in $\pi$, correspond to the directed arc $(v_{i_k},v_{i_{k+1}})$ in $G$, and hence form a block. \qed
\end{proof}
\begin{lem}\label{claim:nphardblockbackward}
    If there exists $\pi\in\Pi(T)$ with $\blocks(\pi)\le 2|V|+3|E|-1-(|V|-1)$ then $G$ contains a Hamiltonian path.
\end{lem}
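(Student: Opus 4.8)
The plan is to reformulate everything in terms of adjacencies of consecutive integers. Write $N=2|V|+3|E|=|\leaves(T)|$, and for $\pi\in\Pi(T)$ let $A(\pi)$ be the number of positions $p$ with $\pi_p+1=\pi_{p+1}$. Since each such adjacency merges two blocks, we have $\blocks(\pi)=N-A(\pi)$, so the hypothesis $\blocks(\pi)\le 2|V|+3|E|-1-(|V|-1)=N-|V|$ is exactly the statement $A(\pi)\ge|V|$. I would then prove the matching upper bound $A(\pi)\le|V|$ for every $\pi\in\Pi(T)$; equality is then forced, and the rigidity of the extremal $\pi$ will hand us a Hamiltonian path.

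The technical core is a classification, based purely on the labelling, of which pairs $(m,m+1)$ can be adjacent and where. First I would record two ``no adjacency'' facts. Inside $T(c_1)$ there is none: the children of $c_1$ are the odd numbers $1,3,\dots,2|V|-1$ together with the leaves $2|V|+3(j-1)+1$, and one checks that no two of these differ by exactly one (gaps are $2$ among the odds, $3$ among the large ones, and $2$ across the two groups). Inside any single vertex-subtree $T(v_i)$ there is also none: the leaf $2i$ differs from every edge-leaf (all $\ge 2|V|+2$) by at least two, and two edge-leaves of $v_i$ come from distinct edge-triples, hence differ by at least two — here I use that $G$ has no self-loops, so $v_i$ contributes at most one leaf per incident edge. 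Consequently every adjacency of $\pi$ occurs either between two distinct children-subtrees of $c_2$, or across the single root boundary between $T(c_1)$ and $T(c_2)$.

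Next I would pin down the cross-vertex adjacencies inside $T(c_2)$. Running through the triple $\ell_j,\ell_j+1,\ell_j+2$ of each edge $e_j$ (with $\ell_j$ a child of $c_1$, $\ell_j+1$ the leaf at the source of $e_j$, and $\ell_j+2$ the leaf at its target) together with the cross-triple pair $(\ell_j+2,\ell_{j+1})$, the only consecutive pair both of whose elements lie in $T(c_2)$ is $(\ell_j+1,\ell_j+2)$, i.e.\ the source-leaf and the target-leaf of the same edge; the pairs $(2i-1,2i)$, $(2i,2i+1)$, $(\ell_j,\ell_j+1)$ and $(\ell_j+2,\ell_{j+1})$ all straddle the $c_1/c_2$ boundary. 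Since each $T(v_i)$ appears as one contiguous block inside $T(c_2)$, such a pair forms an adjacency of $\pi$ only at a boundary between two consecutive children of $c_2$, with the source vertex immediately preceding the target vertex; each realized boundary therefore witnesses a directed arc $(v_a,v_b)\in E$ pointing from the earlier to the later block. As $c_2$ has $|V|$ children, hence $|V|-1$ internal boundaries, at most $|V|-1$ adjacencies arise inside $T(c_2)$, and at most one across the root boundary, giving $A(\pi)\le(|V|-1)+1=|V|$.

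Combining with $A(\pi)\ge|V|$ forces $A(\pi)=|V|$, so every one of the $|V|-1$ boundaries of $c_2$ must carry an adjacency, each being a source-then-target pair of some edge. Reading off the order $v_{i_1},\dots,v_{i_{|V|}}$ of the children of $c_2$ in $\pi$, each consecutive pair $v_{i_k},v_{i_{k+1}}$ is then joined by an arc $(v_{i_k},v_{i_{k+1}})\in E$, so this order is a Hamiltonian path of $G$, as required. I expect the main obstacle to be precisely the exhaustive-but-routine case analysis of the classification — verifying that the sole cross-vertex consecutive pair realizable inside $T(c_2)$ is the (source, target) pair of a single edge, and that no adjacency hides inside $T(c_1)$ or inside a vertex-subtree. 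Once this is nailed down, the counting bound $A(\pi)\le|V|$ and the extraction of the Hamiltonian path are immediate.
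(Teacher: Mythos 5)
Your proof is correct and follows essentially the same route as the paper: both arguments reduce to counting where pairs of consecutive integers can sit adjacently (at most one at the $T(c_1)/T(c_2)$ boundary, at most one per boundary between consecutive children of $c_2$, none inside $T(c_1)$ or inside a single $T(v_i)$), and both read the Hamiltonian path off the order of the children of $c_2$. Your reformulation via $\blocks(\pi)=N-A(\pi)$ is just a cleaner packaging of the paper's ``no blocks of size greater than two, hence $|V|$ blocks of size two'' step, and your explicit classification of the realizable consecutive pairs makes the paper's terse case analysis fully rigorous.
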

\begin{proof}
    Let $\pi\in\Pi(T)$ with $\blocks(\pi)\le 2|V|+3|E|-1-(|V|-1)$. First notice that $\pi$ cannot have any blocks of size greater than two by construction: Every pair of consecutive children in $\child(T(c_1))$ forms a breakpoint as there are no children which form consecutive numbers, and as discussed above, there are no non-breakpoints \revision{amongst two children of a single vertex $v_i$}. Hence, $\pi$ contains $1+(|V|-1)$ blocks of size two. The children of $c_1$ either all appear left or all appear right of $\leaves(T(c_2))$.
    Thus, the blocks of size two can be described as follows. At most one is \revision{formed by} a child of $c_1$ and a child of $v_i$ with $i\in [|V|]$. The remaining ones \revision{are formed by} two leaves $x$, $y$ with $x+1=y$ and such that $x$ is the child of some $v_{i_\alpha}$ and $y$ is the child of some $v_{i_\beta}$. Notice that by construction such a block can only exist if there is a directed arc from $v_{i_\alpha}$ to $v_{i_\beta}$ in $G$. As there are $|V|-1$ adjacencies of that kind, we define the Hamiltonian path $P$ such that $v_{i_\alpha}$ is the predecessor of $v_{i_\beta}$. \qed
\end{proof}
\begin{proof}[Proof of Theorem~\ref{thm:blockprobnp}]
    Notice that the given reduction is polynomial as $T$ has exactly $2|V|+3|E|$ leaves. The correctness follows from \cref{claim:nphardblockforward} and \cref{claim:nphardblockbackward}. \qed
\end{proof}

A simple reduction from the above problem shows the following corollary.
\begin{restatable}{corr}{corrbpnpcomplete}\label{corr:bpnpcomplete}
    For a rooted tree $T$ with leaves $[n]$ it is \NP-complete to decide whether there exists $\pi\in\Pi(T)$ s.t.\ $\breakpoints(\pi)\le k$.
\end{restatable}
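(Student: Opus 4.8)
The plan is to establish \NP-membership trivially and then reduce from the block-minimization problem of \cref{thm:blockprobnp}. Membership is immediate: a permutation $\pi\in\Pi(T)$ is a polynomial-size witness whose breakpoint count can be verified in polynomial time. The obstacle to relating the two problems directly on the same tree is the asymmetry in their definitions. For every $\pi\in\Pi_n$ one has $\breakpoints(\pi)=\blocks(\pi)-1+[\pi_1\ne 1]+[\pi_n\ne n]$, since $\breakpoints(\pi)$ counts the non-consecutive adjacencies in the extended permutation $(0)\star\pi\star(n+1)$. Thus the two boundary indicators can shift the breakpoint count by up to two, and—crucially—no single leaf value can be forced to, or away from, an end of a permutation in $\Pi(T)$: any leaf can be made leftmost by orienting every ancestor appropriately. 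These boundary terms therefore cannot be pinned down individually, which is exactly where the main difficulty lies.

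My key idea to neutralize both boundary terms at once is to glue the global minimum and the global maximum into a single cherry. Given a block-minimization instance $(T_0,k)$ with $\leaves(T_0)=[n]$, I construct $T'$ as follows: relabel every leaf $i$ of $T_0$ to $i+1$, obtaining a subtree $R$ on the value set $\{2,\dots,n+1\}$; add a cherry $C$ whose two leaves are labelled $1$ and $n+2$; and let the root of $T'$ have exactly the two children $C$ and $R$. Then $\leaves(T')=[n+2]$, so $(T',k+2)$ is a valid instance. A constant label shift preserves consecutiveness, so every $\sigma\in\Pi(R)$ is a shifted copy of some $\pi_0\in\Pi(T_0)$ with the same number of internal non-consecutive adjacencies, namely $\blocks(\pi_0)-1$.

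The heart of the proof is the identity $\min_{\pi'\in\Pi(T')}\breakpoints(\pi')=\min_{\pi_0\in\Pi(T_0)}\blocks(\pi_0)+2$, from which the equivalence follows at once with $k'=k+2$. To prove it I would enumerate the four arrangements of $\Pi(T')$ determined by the order of $\{C,R\}$ at the root and the internal order of $C$, for an arbitrary fixed $\sigma$ with first element $s\in\{2,\dots,n+1\}$ and last element $t\in\{2,\dots,n+1\}$. Writing each candidate as an extended permutation with sentinels $0$ and $n+3$ and counting non-consecutive adjacencies, the two arrangements $(1,n+2,\sigma)$ and $(\sigma,1,n+2)$—cherry oriented as $(1,n+2)$ and placed at one end so that its extreme value faces the matching sentinel—each yield exactly $3+(\blocks(\pi_0)-1)=\blocks(\pi_0)+2$ breakpoints, \emph{independent of $s$ and $t$}. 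The remaining two arrangements yield $\blocks(\pi_0)+2+[s\ne 2]$ and $\blocks(\pi_0)+2+[t\ne n+1]$, which are never smaller. Taking the minimum over the four arrangements gives $\blocks(\pi_0)+2$ for every $\pi_0$, and then the minimum over $\pi_0$ gives the claim.

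The reason the construction works, and the single step I expect to need the most care, is the bookkeeping at the sentinels. Because $1$ and $n+2$ are forced adjacent by the cherry, the global minimum and global maximum can never simultaneously occupy the two ends of $\pi'$, so at most one sentinel adjacency is ever consecutive; the optimal arrangement always realizes a fixed surplus of exactly two breakpoints over $\blocks(\pi_0)$, no matter how $R$ is internally ordered. In the case analysis I would verify carefully that the cherry always contributes its own internal breakpoint (since $1+1\ne n+2$ for $n\ge 1$) and that the $\sigma$-to-sentinel adjacency on the side away from the cherry is always a breakpoint (as $t\le n+1<n+2$ and $s\ge 2>1$); this is precisely where the two variable boundary terms are eliminated. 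Since $T'$ has $n+2$ leaves, the reduction is polynomial, completing the argument.
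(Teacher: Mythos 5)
Your construction is exactly the paper's: shift every leaf label of $T$ up by one, attach a new cherry with leaves $1$ and $n+2$ under a new root alongside the old tree, and use the threshold $k+2$; the paper merely asserts the resulting equivalence $\min_{\pi'}\breakpoints(\pi')=\min_{\pi_0}\blocks(\pi_0)+2$ while you verify it by the four-arrangement case analysis. The proposal is correct and takes essentially the same approach.
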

\begin{figure}[t]
    \centering
    \includegraphics{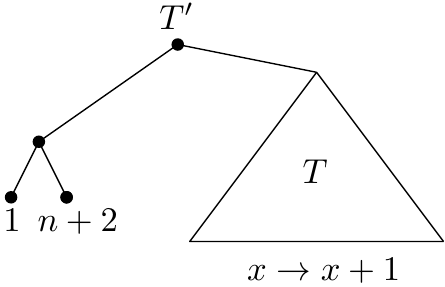}
    \caption{Construction of the tree $T^\prime$ in \cref{corr:bpnpcomplete}.}
    \label{fig:blocksbreakpoints}
\end{figure}
\begin{proof}
    Let $T$ be a tree with $\leaves(T)=[n]$. Construct the tree $T'$ as follows (see \cref{fig:blocksbreakpoints} for a sketch):
    \begin{itemize}
        \item Set $x$ to $x+1$ for every leaf in $T$.
        \item Create a new root with connections to $\rroot(T)$ and a new vertex $u$.
        \item Add leaves $1$ and $n+2$, and connect them to $u$.
    \end{itemize}
    Notice that there exists $\pi\in\Pi(T)$ with $\blocks(\pi)\le k$ if and only if there exists $\pi^\prime\in\Pi(T^\prime)$ with $\breakpoints(\pi^\prime)\le k+2$. Hence, the statement follows from Theorem~\ref{thm:blockprobnp}. \qed
\end{proof}
We believe though, that both problems become tractable if we fix an upper bound on the maximum degree of the tree or the number of blocks/breakpoints.

Hence, the techniques applied in \cref{sec:approx} to obtain a polynomial time approximation algorithm do not extend to non-binary trees. This does not imply, however, that there is no polynomial time approximation for \revision{\bcrprob}.

\section{Summary and Open Problems}
We have analyzed the complexity of minimizing block crossings in one-sided (binary) tanglegrams from different theoretical perspectives. A number of open problems and interesting research directions remain:
\begin{itemize}
    \item We have considered one tree to be fixed. What happens if we can permute the leaves of both trees?
    \item \transsort\ admits better approximations than those we have presented here. However, these algorithms rely on a more complicated structure called \emph{breakpoint graph} (see, e.g., \cite{DBLP:journals/siamdm/BafnaP98,DBLP:journals/tcbb/EliasH06,DBLP:journals/iandc/HartmanS06,DBLP:journals/almob/SilvaKRW22}). Can these approximations be utilized for tanglegrams?
    \item A pair of edges could cross multiple times in the tanglegrams produced by our algorithms. Can this be prevented using a similar notion such as \emph{monotone block crossings} (see \cite{DBLP:journals/jgaa/FinkPW15})?
    \item How do our algorithms perform in practice? What is the relation between number of pairwise crossings and number block crossings in practice?
\end{itemize}

\bibliographystyle{splncs04}
\bibliography{literature.bib}

\clearpage
\appendix
\section{\texorpdfstring{Proof of Equation~\eqref{eq:transpositiontransformation} \cref{lemma:permdelete}}{}}\label{appendix:equationproof}
Let us give the proof of \cref{eq:transpositiontransformation}.
    First notice that our $\ominus$-operation can be defined as follows.
    Let $\pi\in\Pi_n$ and $x\in [n]$, then $\pi\ominus x\in\Pi_{n-1}$ is the permutation defined as
    \begin{equation}(\pi\ominus x)(j)=
        \begin{cases}
            \pi(j)     & \quad\text{ if }j<\inv(\pi)(x)\text{ and }\pi(j)<x                         \\
            \pi(j)-1   & \quad\text{ if }j<\inv(\pi)(x)\text{ and }\pi(j)>x                         \\
            \pi(j+1)   & \quad\text{ if }j\ge \inv(\pi)(x)\text{ and }\pi(j+1)<x                    \\
            \pi(j+1)-1 & \quad\text{ if }j\ge \inv(\pi)(x)\text{ and }\pi(j+1)>x\label{eq:minusdef}
        \end{cases}
    \end{equation}
    
    We show \cref{eq:transpositiontransformation} by induction on $\ell$. The base case ($\ell=0$) is trivial.
    Let us continue with the induction step, thus $\ell\ge 1$ is arbitrary.
    Let us consider the right side of \cref{eq:transpositiontransformation}.
    First, by the induction hypothesis, we have that
    \begin{equation}
        (\pi\ominus\pi_i)\circ\upsilon^1\circ\dots\circ\upsilon^\ell=((\pi\circ\tau^1\circ\dots\circ\tau^{\ell-1})\ominus \pi_i)\circ\upsilon^\ell.
    \end{equation}
    Remember that $\upsilon^\ell=\tau^\ell\ominus\inv(\pi\circ \tau^1\circ\dots\circ \tau^{\ell-1})(\pi_i)$.
    Let $\phi=\pi\circ \tau^1\circ\dots\circ \tau^{\ell-1}$.
    Plugging this into \cref{eq:transpositiontransformation}, we get
    \begin{equation}(\phi\circ\tau^{\ell})\ominus \pi_i=(\phi\ominus \pi_i)\circ (\tau^\ell\ominus \inv(\phi)(\pi_i))\label{eq:reducedtranstransformation}
    \end{equation}
    We consider all cases of the left and right side of \cref{eq:reducedtranstransformation} and show that they result in the same permutation.
    Let us first consider the left side.
    \begin{align}\begin{split}
        &((\phi\circ\tau^{\ell})\ominus \pi_i)(j)=\\
        &\begin{cases}
            (\phi\circ\tau^{\ell})(j)     & \text{if }j<\inv(\phi\circ\tau^{\ell})(\pi_i)\text{ and }(\phi\circ\tau^{\ell})(j)<\pi_i                         \\
            (\phi\circ\tau^{\ell})(j)-1   & \text{if }j<\inv(\phi\circ\tau^{\ell})(\pi_i)\text{ and }(\phi\circ\tau^{\ell})(j)>\pi_i                         \\
            (\phi\circ\tau^{\ell})(j+1)   & \text{if }j\ge\inv(\phi\circ\tau^{\ell})(\pi_i)\text{ and }(\phi\circ\tau^{\ell})(j+1)<\pi_i                     \\
            (\phi\circ\tau^{\ell})(j+1)-1 & \text{if }j\ge\inv(\phi\circ\tau^{\ell})(\pi_i)\text{ and }(\phi\circ\tau^{\ell})(j+1)>\pi_i \label{eq:leftside}
        \end{cases}
    \end{split}\end{align}
    Let us now consider the right side of \cref{eq:reducedtranstransformation}. We will have to consider 16 cases, but some of them will be contradicting as highlighted in red.
    \begin{align}\begin{split}
        &((\phi\ominus \pi_i)\circ (\tau^\ell\ominus \inv(\phi)(\pi_i)))(j)=\\
        &\begin{cases}
            (\phi\ominus\pi_i)(\tau^{\ell}(j))   & \text{ if }j<\inv(\tau^{\ell})(\inv(\phi)(\pi_i))\text{ and }\tau^\ell(j)<\inv(\phi)(\pi_i)      \\
            (\phi\ominus\pi_i)
            (\tau^\ell(j)-1)                     & \text{ if }j<\inv(\tau^{\ell})(\inv(\phi)(\pi_i))\text{ and }\tau^\ell(j)>\inv(\phi)(\pi_i)      \\
            (\phi\ominus\pi_i)(\tau^\ell(j+1))   & \text{ if }j\ge \inv(\tau^{\ell})(\inv(\phi)(\pi_i))\text{ and }\tau^\ell(j+1)<\inv(\phi)(\pi_i) \\
            (\phi\ominus\pi_i)(\tau^\ell(j+1)-1) & \text{ if }j\ge \inv(\tau^{\ell})(\inv(\phi)(\pi_i))\text{ and }\tau^\ell(j+1)>\inv(\phi)(\pi_i)
        \end{cases}
    \end{split}\end{align}
    We consider the 4 cases of this equation in more detail in the next 4 paragraphs.
    \paragraph*{Case 1 $j<\inv(\tau^{\ell})(\inv(\phi)(\pi_i))$ and \textcolor{red}{$\tau^\ell(j)<\inv(\phi)(\pi_i)$}.}
    \begin{align}\begin{split}
        &(\phi\ominus\pi_i)(\tau^{\ell}(j))=\\
        &\begin{cases}
            \phi(\tau^\ell(j))                 & \text{ if }\tau^\ell(j)<\inv(\phi)(\pi_i)\text{ and }\phi(\tau^\ell(j))<\pi_i \\
            \phi(\tau^\ell(j))-1               & \text{ if }\tau^\ell(j)<\inv(\phi)(\pi_i)\text{ and }\phi(\tau^\ell(j))>\pi_i \\
            \text{\textcolor{red}{impossible}} & \textcolor{red}{\text{ if }\tau^\ell(j)\ge\inv(\phi)(\pi_i)}
        \end{cases}
    \end{split}\end{align}
    The last case is impossible as $\tau^\ell(j)\ge\inv(\phi)(\pi_i)$ but $\tau^\ell(j)<\inv(\phi)(\pi_i)$ by case 1.
    Otherwise, we have
    \begin{itemize}
        \item $\phi(\tau^\ell(j))$, if $j<\inv(\tau^{\ell})(\inv(\phi)(\pi_i))$ which is equivalent to $j<\inv(\phi\circ\tau^{\ell})(\pi_i)$, and $\phi(\tau^\ell(j))<\pi_i$.
        \item $\phi(\tau^\ell(j))$, if $j<\inv(\tau^{\ell})(\inv(\phi)(\pi_i))$ which is equivalent to $j<\inv(\phi\circ\tau^{\ell})(\pi_i)$, and $\phi(\tau^\ell(j))>\pi_i$.
    \end{itemize}
    If we compare with \cref{eq:leftside}, we notice that the result of the permutations are equivalent for case 1. So \cref{eq:reducedtranstransformation} holds for this case.

    \paragraph*{Case 2 {$j<\inv(\tau^{\ell})(\inv(\phi)(\pi_i))$ and \textcolor{red}{$\tau^\ell(j)>\inv(\phi)(\pi_i)$}}.}
    \begin{align}\begin{split}
        &(\phi\ominus\pi_i)
        (\tau^\ell(j)-1)=\\
        &\begin{cases}
            \textcolor{red}{\text{impossible}} & \textcolor{red}{\text{ if }\tau^\ell(j)-1<\inv(\phi)(\pi_i)}                       \\
            \phi(\tau^\ell(j))                 & \text{ if }\tau^\ell(j)-1\ge \inv(\phi)(\pi_i)\text{ and }\phi(\tau^\ell(j))<\pi_i \\
            \phi(\tau^\ell(j))-1               & \text{ if }\tau^\ell(j)-1\ge \inv(\phi)(\pi_i)\text{ and }\phi(\tau^\ell(j))>\pi_i
        \end{cases}
    \end{split}\end{align}
    The first case is impossible as $\tau^\ell(j)-1<\inv(\phi)(\pi_i)\iff \tau^\ell(j)\le \inv(\phi)(\pi_i)$ but $\tau^\ell(j)>\inv(\phi)(\pi_i)$ by case 2.
    Otherwise, we have
    \begin{itemize}
        \item $\phi(\tau^\ell(j))$, if $j<\inv(\tau^{\ell})(\inv(\phi)(\pi_i))$ which is equivalent to $j<\inv(\phi\circ\tau^{\ell})(\pi_i)$, and $\phi(\tau^\ell(j))<\pi_i$.
        \item $\phi(\tau^\ell(j))$, if $j<\inv(\tau^{\ell})(\inv(\phi)(\pi_i))$ which is equivalent to $j<\inv(\phi\circ\tau^{\ell})(\pi_i)$, and $\phi(\tau^\ell(j))>\pi_i$.
    \end{itemize}
    If we compare with \cref{eq:leftside}, we notice that the result of the permutations are equivalent for case 2. So \cref{eq:reducedtranstransformation} holds for this case.

    \paragraph*{Case 3 $j\ge \inv(\tau^{\ell})(\inv(\phi)(\pi_i))$ and \textcolor{red}{$\tau^\ell(j+1)<\inv(\phi)(\pi_i)$}.}
    \begin{align}\begin{split}
        &(\phi\ominus\pi_i)(\tau^\ell(j+1))=\\
        &\begin{cases}
            \phi(\tau^\ell(j+1))               & \text{ if }\tau^\ell(j+1)<\inv(\phi)(\pi_i)\text{ and }\phi(\tau^\ell(j+1))<\pi_i \\
            \phi(\tau^\ell(j+1))-1             & \text{ if }\tau^\ell(j+1)<\inv(\phi)(\pi_i)\text{ and }\phi(\tau^\ell(j+1))>\pi_i \\
            \textcolor{red}{\text{impossible}} & \textcolor{red}{\text{ if }\tau^\ell(j+1)\ge\inv(\phi)(\pi_i)}
        \end{cases}
    \end{split}\end{align}
    The last case is impossible as $\tau^\ell(j+1)\ge\inv(\phi)(\pi_i)$ but $\tau^\ell(j+1)<\inv(\phi)(\pi_i)$ by case 3.
    Otherwise, we have
    \begin{itemize}
        \item $\phi(\tau^\ell(j+1))$, if $j\ge\inv(\tau^{\ell})(\inv(\phi)(\pi_i))$ which is equivalent to $j\ge\inv(\phi\circ\tau^{\ell})(\pi_i)$, and $\phi(\tau^\ell(j+1))<\pi_i$.
        \item $\phi(\tau^\ell(j+1))-1$, if $j\ge\inv(\tau^{\ell})(\inv(\phi)(\pi_i))$ which is equivalent to $j\ge\inv(\phi\circ\tau^{\ell})(\pi_i)$, and $\phi(\tau^\ell(j+1))>\pi_i$.
    \end{itemize}
    If we compare with \cref{eq:leftside}, we notice that the result of the permutations are equivalent for case 3. So \cref{eq:reducedtranstransformation} holds for this case.

    \paragraph*{Case 4 $j\ge \inv(\tau^{\ell})(\inv(\phi)(\pi_i))$ and \textcolor{red}{$\tau^\ell(j+1)>\inv(\phi)(\pi_i)$}.}
    \begin{align}\begin{split}
        &(\phi\ominus\pi_i)(\tau^\ell(j+1)-1)=\\
        &\begin{cases}
            \textcolor{red}{\text{impossible}} & \textcolor{red}{\text{ if }\tau^{\ell}(j+1)-1<\inv(\phi)(\pi_i)}                       \\
            \phi(\tau^{\ell}(j+1))             & \text{ if }\tau^\ell(j+1)-1\ge \inv(\phi)(\pi_i)\text{ and }\phi(\tau^\ell(j+1))<\pi_i \\
            \phi(\tau^\ell(j+1))-1             & \text{ if }\tau^\ell(j+1)-1\ge \inv(\phi)(\pi_i)\text{ and }\phi(\tau^\ell(j+1))>\pi_i
        \end{cases}
    \end{split}\end{align}
    The first case is impossible as $\tau^{\ell}(j+1)-1<\inv(\phi)(\pi_i)\iff \tau^{\ell}(j+1)\le \inv(\phi)(\pi_i)$ but $\tau^\ell(j+1)>\inv(\phi)(\pi_i)$ by case 4.
    Otherwise, we have
    \begin{itemize}
        \item $\phi(\tau^\ell(j+1))$, if $j\ge\inv(\tau^{\ell})(\inv(\phi)(\pi_i))$ which is equivalent to $j\ge\inv(\phi\circ\tau^{\ell})(\pi_i)$, and $\phi(\tau^\ell(j+1))<\pi_i$.
        \item $\phi(\tau^\ell(j+1))-1$, if $j\ge\inv(\tau^{\ell})(\inv(\phi)(\pi_i))$ which is equivalent to $j\ge\inv(\phi\circ\tau^{\ell})(\pi_i)$, and $\phi(\tau^\ell(j+1))>\pi_i$.
    \end{itemize}
    If we compare with \cref{eq:leftside}, we notice that the result of the permutations are equivalent for case 4. So \cref{eq:reducedtranstransformation} holds for this case.

    In all cases \cref{eq:reducedtranstransformation} holds, which completes the induction step. \qed

\end{document}